\documentclass[opre,nonblindrev]{informs3}
\DoubleSpacedXI 
\newcount\Comments
\newcommand{\kibitz}[2]{\ifnum\Comments=1{\color{#1}{#2}}\fi}

\newcount\Drop  
\newcommand{\todrop}[2]{\ifnum\Drop=1{\color{#1}{#2}}\fi}

\newif\ifhighlight 
\highlighttrue 

\newif\iftodo

\usepackage{amsmath, bm}

\newcommand\vu{{\bm u}}
\def\vv{{\bm v}}
\newcommand\vw{{\bm w}}

\newcommand{\R}{\mathbb{R}}

\renewcommand{\P}{\mathbb{P}}

\newcommand{\E}{\mathbb{E}}

\newcommand{\indicator}{\mathds{1}}

\newcommand{\defeq}{\triangleq}

\newcommand{\gre}{\mathsf{GRE}}
\newcommand{\opt}{\mathsf{OPT}}
\newcommand{\off}{\mathsf{OFF}}

\usepackage[english]{babel}
\usepackage[autostyle, english = american]{csquotes}
\MakeOuterQuote{"}

\usepackage{bm,bbm,xspace,multirow,multicol,dsfont}
\usepackage[colorlinks,
	citecolor = dark6,
	urlcolor = black,
	linkcolor = dark6,
    hypertexnames=false]{hyperref}
    
\usepackage[font=scriptsize]{subcaption}
\usepackage[short]{optidef}
\usepackage[nameinlink]{cleveref}
\crefname{subsection}{subsection}{subsections}
\usepackage[normalem]{ulem}
\usepackage{algorithm}
\usepackage{algpseudocode}
\algrenewcommand\algorithmicrequire{\textbf{Input:}}
\algrenewcommand\algorithmicensure{\textbf{Output:}}
\usepackage{tikz,pgfplots}
\pgfplotsset{compat=1.18}  

\def\theARTICLETOP{}
\def\theLRHFirstLine{}
\def\theLRHSecondLine{}
\def\theRRHFirstLine{}
\def\theRRHSecondLine{}

\usepackage{xcolor}
\definecolor{dark1}{RGB}{157, 58, 103}
\definecolor{dark2}{RGB}{161, 67, 0}
\definecolor{dark3}{RGB}{115, 102, 0}
\definecolor{dark4}{RGB}{2, 120, 50}
\definecolor{dark5}{RGB}{0, 116, 122}
\definecolor{dark6}{RGB}{18, 100, 176}
\definecolor{dark7}{RGB}{116, 75, 163}
\definecolor{mid1}{RGB}{225, 119, 163}
\definecolor{mid2}{RGB}{228, 128, 77}
\definecolor{mid3}{RGB}{182, 158, 21}
\definecolor{mid4}{RGB}{87, 182, 109}
\definecolor{mid5}{RGB}{0, 181, 190}
\definecolor{mid6}{RGB}{88, 162, 242}
\definecolor{mid7}{RGB}{177, 135, 229}
\definecolor{light1}{RGB}{255, 187, 231}
\definecolor{light2}{RGB}{255, 198, 151}
\definecolor{light3}{RGB}{247, 223, 104}
\definecolor{light4}{RGB}{152, 248, 171}
\definecolor{light5}{RGB}{72, 249, 255}
\definecolor{light6}{RGB}{164, 219, 255}
\definecolor{light7}{RGB}{244, 192, 255}
\definecolor{lightyellow}{RGB}{255, 255, 204}

\usepackage{color}              
\usepackage{color-edits}
\addauthor{Will}{blue}
\addauthor{Hma}{dark4}

\usepackage{natbib}
 \bibpunct[, ]{(}{)}{,}{a}{}{,}%
 \def\bibfont{\small}%
\TheoremsNumberedThrough     
\ECRepeatTheorems

\EquationsNumberedThrough    

\usepackage{enumerate}

\begin{document}


\RUNAUTHOR{Ma, Ma, and Ro mero}

\RUNTITLE{Greedy Sequential Offering}

\TITLE{Sequential Offering in On-Demand Platforms: On the Optimality of Greedy Ranking}

\ARTICLEAUTHORS{
\AUTHOR{Hongyao Ma}
\AFF{Graduate School of Business, Columbia University, New York, NY 10027, \EMAIL{hongyao.ma@columbia.edu}}
\AUTHOR{Will Ma}
\AFF{Graduate School of Business, Columbia University, New York, NY 10027, \EMAIL{wm2428@gsb.columbia.edu}}
\AUTHOR{Matias Romero}
\AFF{Graduate School of Business, Columbia University, New York, NY 10027, \EMAIL{mer2262@columbia.edu}}
}

\ABSTRACT{
On-demand platforms face the fundamental challenge of fulfilling time-sensitive jobs with independent workers who may decline offers.
To minimize delays and unfulfilled jobs, platforms frequently raise the offered wage sequentially following each rejection.
However, the interaction between these dynamic price adjustments and the specific sequence in which workers are approached has been overlooked.
In particular, if the best-suited workers (e.g., closest to the job) are also ranked earliest in the sequence, then those workers would see the lowest offered wages and may decline, leading to poor system outcomes where less-suited workers end up seeing the raised wages and accepting the job.

We study the sequential offering problem to maximize expected welfare or platform profit by jointly optimizing the ranking of workers and the pricing trajectory.
Surprisingly, our main result establishes that if the reservation wage distribution exhibits a non-increasing and convex density function (e.g., Uniform, Exponential), welfare is maximized by greedy ranking and wages optimized via backward induction.
For arbitrary distributions, we prove that greedy ranking achieves a tight $n/(2n-1)$ fraction of the prophet benchmark.
Numerical results for settings beyond the distributional assumptions find welfare losses well below those allowed by the universal guarantee, even in families where greedy is provably suboptimal.
This suggests that rather than sending initial ``low ball'' offers to worse
matches, platforms should stick with greedy ranking and optimize the wage offerings by appropriately
taking the continuation value of the downstream offers into consideration.
}



\maketitle

\section{Introduction} \label{sec:intro}

On-demand platforms face the challenge of fulfilling time-sensitive jobs (i.e., customer orders or requests) without the authority to enforce acceptance from workers (e.g. drivers, couriers, etc).
Instead, a job is offered to workers sequentially, and is often rejected multiple times before a worker accepts the job.
These rejections are largely driven by workers' sensitivity to offered wages \citep{allon2023impact}.
With job characteristics and wage now shown upfront across major platforms, the resulting cherry-picking has led to widespread reports of single-digit acceptance rates \citep{ford2021bloomberg,reddit_low_acceptance}.
%

To reduce delays and unfulfilled jobs, both of which entail significant costs, platforms often increase the offered wage after each rejection.
This strategy has been widely adopted across diverse domains, spanning food delivery \citep{doordash2025earnings}, ride-sharing \citep{uber2024earnings}, freight marketplaces \citep{amazon2023flexoffers}, and healthcare staffing \citep{carerev2023smartrates}.
While such dynamic wage adjustments effectively improve reliability, the interaction with the sequence (i.e., ranking) in which workers are chosen has been largely overlooked. 
Specifically, greedily ranking the best worker first can be inefficient when the offered wages first “low ball” the worker who are good matches, and later incentivize the poor matches to accept the job with higher wages. 

To illustrate, consider offering a job sequentially to two workers, where one is a better match than the other.
Suppose the platform makes an initial ``low ball'' offer of \$5 that is always rejected.
If rejected, the wage increases for the second worker to \$10, which is accepted by most workers.
Under a greedy ranking, the best-matching worker rejects the first offer and the worse-matching worker fulfills the job.


This example raises the question of whether platforms should adjust their ranking decisions in response to anticipated wage sequences.
%
However, attempting to solve this joint optimization problem iteratively creates a complex feedback loop that is not guaranteed to resolve the underlying inefficiency.

In this paper, we propose the following simple method for jointly optimizing the ranking and wages: fix the ranking to be greedy, and then optimize wages conditional on that ranking.
This replaces a search over worker permutations with a single sorting step, ranking workers in decreasing order of their match quality with the job.
Given this ranking, the optimal wages can then be tractably computed using dynamic programming.

%

Surprisingly, we show that this simple greedy-rank-then-optimize-wages approach is in fact optimal for a broad class of distributions governing workers' willingness to work. 
We next formalize the model and present our main results.

\subsection{Model Description}
We consider a platform seeking to assign a job to one of several heterogeneous workers.
Worker $i$ generates value $v_i$ to the platform if assigned the job, where differences in $v_i$ may reflect, for example, workers’ locations and hence their quality of match with the job. 
The platform approaches workers sequentially, choosing both the sequence in which to approach them and the wage offered to each worker; worker $i$ accepts if the offered wage is at least their reservation wage, drawn independently from a common distribution $F$. 
The platform’s objective is to jointly choose the ranking and wages to maximize its expected payoff.
Our primary results concern total welfare, the match value minus the worker's realized private cost.
Even for this objective, unlimited wages are undesirable: selective offers preserve the opportunity to obtain greater welfare from another worker.
Optimal net wages account for this opportunity cost and need not increase after every rejection, while the corresponding total wages weakly increase as the continuation value falls.
We also demonstrate in \Cref{sec:quantile_framework} that the optimality of greedy ranking can be extended to profit maximization for a restricted class of distributions that includes Exponential and Uniform.

\subsection{Main Contributions}

\paragraph{Exact optimality of greedy ranking.} 
We establish in \Cref{thm:global_optimality} that if the distribution of reservation wages admits a non-increasing and convex density function, the exact welfare-optimal solution is achieved by greedy ranking and dynamically optimizing wages via backward induction.
A non-increasing density means that an additional dollar raises acceptance less at higher wages, while convexity means that this marginal response declines at non-increasing rate.
The first property keeps earlier, selectively priced offers sufficiently likely to succeed, and the second controls the welfare effect of re-optimizing wages after a swap.
The class includes Uniform, Exponential and Pareto distributions.
We also prove greedy optimality for Half-normal costs, despite their non-convex density (\Cref{sec:half_normal}), while Logistic costs give a distinct result: every ranking yields the same welfare under its optimal wages (\Cref{sec:logistic_indifference}), and hence greedy is still optimal.
%

\begin{figure}[htbp]
    \centering
    \begin{tikzpicture}
        \draw[thick] (-7.5, -4.5) rectangle (7.5, 4.5);
        \node[anchor=north west] at (-7.3, 4.3) {Distribution $F$};

        \fill[yellow!20, fill opacity=0.5] (0, -0.1) ellipse (6.8cm and 3.9cm);
        \draw[yellow!80!black, thick, dashed] (0, -0.1) ellipse (6.8cm and 3.9cm);
        \node at (0, 3.0) {Computational evidence [\Cref{sec:simulations}]};

        \fill[blue!15, fill opacity=0.6] (0, -1.15) ellipse (6.0cm and 2.8cm);
        \draw[blue, thick] (0, -1.15) ellipse (6.0cm and 2.8cm);
        \node at (0, 1.2) {Proofs};

        \fill[green!15, fill opacity=0.7] (0, -1.9) ellipse (5.0cm and 2.0cm);
        \draw[green!80!black, thick] (0, -1.9) ellipse (5.0cm and 2.0cm);
        \node at (0, -0.8) {Sufficient conditions [\Cref{thm:global_optimality}]:
        };
        \node at (0, -1.3) {Non-increasing and convex density $f=F'$
        };

        
        \fill (-7.0, -4.25) circle (2pt);
        \node[right, align=left] at (-7.0, -4.25) {\footnotesize Beta$(\alpha,1)$ with $\alpha>1$
        };
        \fill (-2.0, -4.25) circle (2pt);
        \node[right, align=left] at (-2.0, -4.25) {\footnotesize Weibull with $k>k_0\approx3.45$
        };
        \fill (4.0, -4.25) circle (2pt);
        \node[right, align=left] at (4.0, -4.25) {\footnotesize Gumbel for minima
        };

        \fill (-4, 2.25) circle (2pt);
        \node[right] at (-4, 2.25) {\footnotesize Normal};
        \fill (-1, 2.25) circle (2pt);
        \node[right] at (-1, 2.25) {\footnotesize Log-normal};
        \fill (2., 2.25) circle (2pt);
        \node[right] at (2., 2.25) {\footnotesize Log-logistic};

        \fill (0.8, 0.5) circle (2pt);
        \node[right] at (0.8, 0.5) {\footnotesize Logistic [\Cref{sec:logistic_indifference}]};
        \fill (-3.8, 0.5) circle (2pt);
        \node[right] at (-3.8, 0.5) {\footnotesize Half-normal [\Cref{sec:half_normal}]};
        

        \fill (-2.5, -2.2) circle (2pt);
        \node[right] at (-2.5, -2.2) {\footnotesize Uniform};
        \fill (-0.9, -3.0) circle (2pt);
        \node[right] at (-0.9, -3.0) {\footnotesize Exponential};
        \fill (1.4, -2.2) circle (2pt);
        \node[right] at (1.4, -2.2) {\footnotesize Pareto};

    \end{tikzpicture}
    \caption{Selected reservation wage distributions classified by the evidence for greedy optimality. The distribution below the ellipses admit counterexamples}
    \label{fig:distributions_hierarchy_colored}
\end{figure}

\paragraph{Worst-case guarantees.}
To formalize the performance limits of the greedy heuristic when it is suboptimal, we establish tight worst-case guarantees.
We know from the prophet inequality literature and its connection to sequential pricing \citep{correa2019pricing} that $1/2$ serves as a universal lower bound for greedy ranking relative to the offline optimum.
We show that greedy ranking cannot do better even if it is only comparing against the best ranking, noting that we need a new example because the basic counterexample of 1/2 \citep{krengel1977semiamarts} does not fit into our IID setting.
Moreover, we show for every number of workers $n$ that the tight performance guarantee of greedy is $n/(2n-1)$, which approaches 1/2 as $n\to\infty$.

\paragraph{Numerical performance beyond sufficient conditions.} 
Extensive numerical results find no counterexamples to greedy optimality for Normal, Log-normal, Log-logistic, and Gumbel for maxima (\Cref{fig:distributions_hierarchy_colored}).
More revealing are the cases where greedy is suboptimal.
A mean-below-median condition produces counterexamples for Beta, Weibull and Gumbel for minima.
Yet deliberate searches for adverse instances find much smaller losses than the universal guarantee permits.
At eight workers, the largest losses found are about 2.4\% for Weibull with $k=10$, 5.0\% for Gumbel (for minima), and 11.4\% for Beta, compared with the worst-case loss of 46.7\% (\Cref{sec:simulations}).

\subsection{Connection to an open theoretical problem in prophet inequalities}
\label{sec:prophet_inequalities}

Our joint optimization problem can be framed as a \textit{free-order} prophet inequality problem \citep[see][for problem definition]{hill1983prophet}.
In the general formulation, determining the optimal sequencing rule is computationally intractable and known to be NP-hard \citep{agrawal2020optimal}.
However, our environment exhibits a specific structure: rather than the general problem, we are studying an IID special case with location shifts, where each worker's acceptance threshold is determined by a heterogeneous parameter.

Because of this structure, we ask a very different question, which is \textit{characterizing the set of distributions} for which the greedy-rank-then-optimize-wages heuristic is \textit{optimal} under \textit{any} location shifts. 
A similar question has in fact previously intrigued \citet{hill1985selection}; however, to the best of our knowledge, our paper represents the only progress on this type of question since then.

\subsection{Further Related Work}

\paragraph{Matching and pricing in on-demand platforms.} 
Our work contributes to the extensive literature on pricing and matching in on-demand platforms.
In this domain, the initial focus is on demand-side pricing and spatial-temporal matching to balance overall network capacity, such as using surge pricing to manage customer demand and network efficiency \citep{castillo2017surge,bimpikis2019spatial}.
On the supply side, existing research primarily evaluates optimal wage contracts and commission structures under the assumption of aggregate supply responses or self-scheduling capacity \citep{cachon2017role,taylor2018demand}.
Building on this, recent literature has increasingly examined driver pricing and scheduling through a macroscopic, equilibrium lens \citep{besbes2021surge,ma2022spatio,garg2022driver}.
These approaches generally assume the platform acts as a macroscopic market maker, optimizing market-clearing prices across a broader pool of agents rather than executing sequential, targeted offers to specific heterogeneous workers.
%


\paragraph{Sequencing and sequential pricing.}
The stochastic probing literature largely focuses on algorithmic approximations for environments with exogenous acceptance probabilities. 
\citet{purohit2019hiring} formalize the hiring under uncertainty problem, demonstrating that greedy heuristics are generally suboptimal under arbitrary acceptance probabilities. 
\citet{epstein2024selection} employ linear programming relaxations and dependent rounding to establish constant-factor approximations for sequential and parallel offering models subject to strict capacity constraints. 
%
%
A fundamental distinction in our setting is that acceptance probabilities are endogenous via the optimal wage decision.
Surprisingly, this can make the optimal policy easier to compute by preventing adversarial sequence of acceptance probabilities. 

A second stream of literature scales sequential decision-making to endogenous settings, where platform actions jointly dictate acceptance probabilities and rewards. Performance in this domain is typically evaluated against offline prophet benchmarks. While allowing the decision-maker to optimize the sequence---the \textit{free-order} variant---improves competitive ratios \citep{beyhaghi2021improved,peng2022order}, the analytical focus remains reliant on fractional approximations. Extending these frameworks to two-sided networks, \citet{pollner2024improved} analyze sequential pricing on bipartite graphs. More recently, \citet{derakhshan2026approximation} investigate an action-reward framework, utilizing a configuration linear program and an efficient polynomial-time approximation scheme (EPTAS) oracle to bypass severe integrality gaps. Our work departs from this reliance on algorithmic approximation; we return to exact structural optimality, mapping the specific distributional conditions under which exact dynamic programming aligns perfectly with simple greedy sequencing.


\section{The Model}
\label{sec:model}

We consider the sequential offering problem faced by on-demand platforms. 
The platform seeks to fulfill a job, which yields a value $v$ if completed, by matching it with one worker from a finite pool $I=[n]$.
To fulfill the job, worker $i \in I$ incurs a total cost of $c_i + Z_i$, where $c_i$ is a deterministic scalar that captures platform-observable components, and $Z_i$ is a random variable with cumulative distribution function (CDF) $F$ representing private factors.
For instance, in food delivery platforms, $c_i$ can account for attributes such as heterogeneous locations and expected travel times, as well as batching efficiencies if the worker already has another job in their bag.
Within this same context, $Z_i$ captures unobservable idiosyncratic preferences that are treated as independent and identically distributed (i.i.d.) across workers to comply with fairness norms and regulations against price discrimination.
Given this cost structure, worker $i$ will accept a \textit{total} wage of $w^{\mathsf{total}}_i$ if and only if it compensates for their total cost, $w^{\mathsf{total}}_i \ge c_i + Z_i$. 
Upon acceptance, the platform realizes a profit of $v - w^{\mathsf{total}}_i$, and the worker derives a surplus of $w^{\mathsf{total}}_i - c_i - Z_i$. 
The total welfare generated by the match is the sum of these payoffs, $v - c_i - Z_i$.
We normalize the platform's offer and the job's value against the observable cost $c_i$ by defining the \textit{net wage} as $w_i \defeq w^{\mathsf{total}}_i - c_i$ and the \textit{net value} as $v_i \defeq v - c_i$. 
Under these definitions, the worker accepts the offer when their private cost falls below the net wage ($Z_i \le w_i$), yielding an acceptance probability of $F(w_i)$.  If they accept, the platform's profit is $v_i - w_i$ and the total welfare generated is $v_i - Z_i$. 
Thus, the problem environment is fully characterized by the distribution $F$ and net values $\vv=(v_1,\ldots,v_n)\in\R^n$.
Throughout, we assume $\E[Z^-]<\infty$, where $Z^-\defeq\max\{-Z,0\}$, so expected welfare is finite for every finite worker pool. 
The platform's problem consists of making sequential take-it-or-leave-it offers, jointly deciding the sequence in which workers are approached and the net wage to offer. 
Formally, the platform selects a \textit{ranking} policy $\sigma$, defined as a bijection $\sigma:[n]\to I$, that dictates the order in which workers are approached, alongside a sequence of net wages $\vw = (w_1, w_2, \dots, w_n) \in \mathbb{R}^n$.
At each step $t = 1, \dots, n$, the platform approaches worker $\sigma(t)$ and offers $w_t$. 
With probability $F(w_t)$, the offer is accepted, and the job is fulfilled, yielding welfare of $v_{\sigma(t)} - Z_{\sigma(t)}$.\footnote{While our main exposition focuses on welfare, in \Cref{sec:quantile_framework} we generalize our framework to capture profit maximization.} 
If the offer is rejected, the platform proceeds to step $t+1$. 
Since private costs $(Z_i)_i$ are i.i.d., the platform's objective depends on the ranking policy solely through the induced sequence of net values.
Let $\Pi(\vv) \defeq \{ (v_{\sigma(1)}, v_{\sigma(2)}, \dots, v_{\sigma(n)}) \mid \sigma : [n] \to I \text{ is bijective} \}$ denote the set of all such permutations. 
We streamline our notation by evaluating policies directly via their corresponding sequence $\vu\in \Pi(\vv)$, expressing the expected welfare under net wages $\vw$ as
\begin{equation} \label{eq:welfare_def}
V(\vu, \vw) = \sum_{t=1}^n \left( \prod_{s=1}^{t-1} (1 - F(w_s)) \right) F(w_t)\,\E[u_t - Z \mid  Z \le w_t].
\end{equation}
To evaluate and compare ranking policies under their respective optimal wages, we overload the notation and define the maximum expected welfare achievable under sequence $\vu$ as $V(\vu)$, and the performance of the optimal policy as $\opt(\vv)\defeq \max_{\vu\in\Pi(\vv)}V(\vu)$.
We assume without loss of generality that the worker pool is indexed such that $v_1 \ge v_2 \ge \dots \ge v_n$, such that the \textit{greedy ranking policy} is represented by $\vv \in \Pi(\vv)$, and we denote its performance by $\gre(\vv)\defeq V(\vv)$.

This sequential offering problem fits into the broad literature on sequential posted pricing and optimal stopping problems.
In particular, for any given sequence, the optimal wages can be solved efficiently via dynamic programming.
The sequencing decision, however, poses a significant theoretical challenge: although irrelevant in a purely i.i.d. environment, it is computationally NP-hard when arbitrarily heterogeneous distributions are allowed \citep{agrawal2020optimal}.
Crucially, our setting exhibits a \textit{location-shift structure}.
Because workers share the same private cost distribution, their realized welfare values $v_i-Z_i$ are location shifts of a common random variable $-Z$.
This specific structure interpolates between the two contrasting regimes and motivates a natural heuristic: a greedy ranking policy that approaches workers in descending order of their observable net values.

In what follows, we leverage this structure to characterize the optimal wages and, more importantly, to derive a tractable recursion for $V(\vu)$ that allows us to systematically analyze and compare different ranking policies.

\section{Optimality of Greedy Ranking}
\label{sec:optimality}

In this section, we establish the exact optimality of the greedy ranking policy.
We first derive a closed-form recursive solution for the optimal wages and welfare objective under an arbitrary sequence, and then leverage the structure to study adjacent pairwise swaps.


\subsection{Optimal Wages}
Given a sequence $\vu\in\Pi(\vv)$, the wage optimization problem is a finite-horizon dynamic program over the sequence of workers. 
Let $V(\vu_{t:n})$ denote the optimal expected continuation value derived from the remaining sequence of workers from step $t$ to $n$, with the convention that $V(\vu_{n+1:n})=0$.
At each step $t \in [n]$, the platform decides the wage $w_t$ to offer the current worker, balancing the immediate expected welfare against the continuation value. 
The Bellman equation is:
\begin{align}
V(\vu_{t:n})
&=
\max_{w_t\in \R}\left\{
F(w_t)\big(u_t-\E[Z\mid Z\le w_t]\big)
+\big(1-F(w_t)\big)V(\vu_{t+1:n})
\right\} \notag\\
&=
V(\vu_{t+1:n})
+
\max_{w_t\in \R}\left\{
F(w_t)\big(u_t-\E[Z\mid Z\le w_t]-V(\vu_{t+1:n})\big)
\right\}.
\label{eq:bellman}
\end{align}

Notice that the optimal wage depends entirely on the marginal value of the current worker on top of the downstream continuation value, $x \defeq u_t - V(\vu_{t+1:n})$.
To capture the expected incremental welfare generated at any given step, we define the \textit{incremental value function}
\begin{equation}
    \psi(x) \defeq \max_{w \in \R} \left\{ F(w)\big(x - \E[Z \mid Z \le w]\big) \right\}. \label{eq:incremental_value}
\end{equation}
The function $\psi(x)$ is the option value of approaching a worker whose match value exceeds the downstream continuation value by $x$: only realizations with private cost below $x$ add welfare.
Its derivative, $\psi'(x)=F(x)$ at continuity points, is the acceptance probability and hence the marginal welfare return to improving that worker's match value.
In particular, $\psi$ is non-decreasing an 1-Lipschitz.
The following lemma establishes that it is welfare-optimal to set the net wage equal to the marginal value, causing worker $\sigma(t)$ to accept whenever $V(\vu_{t+1:n})\le u_t - Z$, i.e.\ whenever the welfare from them accepting exceeds the value-to-go $V(\vu_{t+1:n})$.  

\begin{lemma} \label{lem:optimal_wages}
For any marginal value $x\in\R$, the maximization in the incremental value function is solved by setting $w = x$, which evaluates to:
\begin{equation}
    \psi(x) = \E[\max\{0, x - Z\}] = \int_{-\infty}^{x} F(s)ds. \label{eq:psi_integral}
\end{equation}
Consequently, the optimal wage at step $t$ is $w_t = u_t - V(\vu_{t+1:n})$, and the optimal continuation values satisfy the recursion:
\begin{align}
    V(\vu_{t:n}) &= V(\vu_{t+1:n}) + \psi\big(u_t - V(\vu_{t+1:n})\big), \quad t \in [n]. \label{eq:value_update}
\end{align}
\end{lemma}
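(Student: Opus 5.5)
The plan is to rewrite the objective in \eqref{eq:incremental_value} as a single expectation and compare it pointwise with the truncated payoff $\max\{0,x-Z\}$. For any $w\in\R$ with $F(w)>0$,
\begin{equation*}
F(w)\big(x-\E[Z\mid Z\le w]\big)=\E\big[(x-Z)\,\mathbb{1}\{Z\le w\}\big].
\end{equation*}
When $F(w)=0$ the objective is interpreted as $0$, which agrees with the right-hand side. The assumption $\E[Z^-]<\infty$ makes this expectation well defined and finite from above: on $\{Z\le w\}$ we have $x-Z\le x+Z^-$. The value may be $-\infty$ only if $w$ is large and $Z$ has a heavy right tail, which is harmless for a maximization.

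Pointwise, $(x-Z)\mathbb{1}\{Z\le w\}\le \max\{0,x-Z\}$ for every $w$. Equality holds for every realization when $w=x$, because then the indicator selects exactly the event $\{x-Z\ge 0\}$. Taking expectations gives the upper bound $\E[\max\{0,x-Z\}]$ for the objective at every $w$, and this bound is attained at $w=x$. Hence $w=x$ is a maximizer and $\psi(x)=\E[\max\{0,x-Z\}]$.

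The integral form follows from the layer-cake representation. Write
\begin{equation*}
\max\{0,x-Z\}=\int_{-\infty}^{x}\mathbb{1}\{Z\le s\}\,ds,
\end{equation*}
which holds because the integrand equals $1$ exactly for $s\in[Z,x]$ when $Z\le x$, and the integrand vanishes identically when $Z>x$. Tonelli's theorem applies since the integrand is nonnegative, so taking expectations gives $\int_{-\infty}^{x}F(s)\,ds$. This quantity is finite because it is bounded by $\max\{x,0\}+\E[Z^-]$.

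For the consequences, substitute $x=u_t-V(\vu_{t+1:n})$ into the second line of the Bellman equation \eqref{eq:bellman}. The maximization there is exactly $\psi$ evaluated at this $x$, so the optimal wage is $w_t=u_t-V(\vu_{t+1:n})$ and the recursion \eqref{eq:value_update} follows. A backward induction from $V(\vu_{n+1:n})=0$ guarantees that each continuation value is a finite real number, so $x$ is well defined at every step.

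I expect the only delicate point to be well-definedness: the convention at $F(w)=0$ and the integrability needed for the expectations and for Tonelli. Both are handled by $\E[Z^-]<\infty$ together with the nonnegativity of the integrand.
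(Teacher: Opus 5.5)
Your proposal is correct and follows the paper's proof essentially step for step. You rewrite the objective as $\E[(x-Z)\indicator\{Z\le w\}]$, bound it pointwise by $\max\{0,x-Z\}$ with equality at $w=x$, apply Tonelli to the layer-cake identity, and substitute $x=u_t-V(\vu_{t+1:n})$ into \eqref{eq:bellman}. One harmless slip: the objective can never be $-\infty$. On $\{Z\le w\}$ you have $x-Z\ge x-w$, so the integrand is bounded below by the constant $\min\{0,x-w\}$ whatever the right tail of $Z$.
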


\proof{Proof.}
Let $x\in\R$.
%
Notice that for any wage $w\in\R$, we have $F(w)\E[Z \mid Z \le w] = \E[Z \indicator\{Z \le w\}]$.
Hence, the maximization objective in \eqref{eq:incremental_value} simplifies to $\max_{w \in \R} \E[(x - Z)\indicator\{Z \le w\}]$, which is at most $\E[\max\{0, x-Z\}]$.
%
%
This upper bound is achieved by setting $w = x$, which aligns the indicator to include all and only the realizations where $x-Z \ge 0$. 
%
%
Finally, by expressing $\max\{0, x-Z\} = \int_{-\infty}^x \indicator\{Z \le s\} ds$, we can apply Tonelli's Theorem to interchange the expectation and integral, completing the proof.
\hfill\Halmos
\endproof

As a consequence of \Cref{lem:optimal_wages}, the optimal total wage $w^{\mathrm{total}}_t=c_{\sigma(t)}+w_t=v-V(u_{t+1:n})$ weakly increases after each rejection because the continuation value decreases.

\subsection{Local Stability}
\label{subsec:local_optimality}

Before proving global optimality, we establish that when wages are correctly optimized for the greedy sequence $\vv\in\Pi(\vv)$, no other ranking achieves higher expected welfare at those specific wages. 
This ensures that if the current wages are optimized for the greedy sequence, then updating the sequence alone cannot improve the welfare.
This result acts as an essential component of our global optimality proof.

To formalize this, consider two adjacent workers in a greedy sequence at positions $t$ and $t+1$. 
A sufficient condition for stability against a pairwise swap is that, conditional on reaching this pair, the probability that the downstream worker fulfills the job does not exceed that of the upstream worker.
If the probability inequality is reversed, a swap strictly improves welfare whenever $v_t>v_{t+1}$ and the pair is reached with positive probability. 
Equal values produce no gain from swapping at fixed wages.
Therefore, stability requires:
\begin{equation}
    F(w_t) \ge F(w_{t+1})\big(1 - F(w_t)\big), \quad \forall\,t\in[n-1]. \label{eq:local_stability_prob}
\end{equation}
Let $x=v_{t+1}-V(\vv_{t+2:n})$ and $y=v_{t}-V(\vv_{t+2:n})$ denote their respective marginal values over the shared future continuation value.
By \Cref{lem:optimal_wages}, it is welfare-optimal to set $w_{t+1} = x$ and $w_t = y - \psi(x)$. 
Substituting these into the stability condition and rearranging gives $F(y-\psi(x)) \ge \frac{F(x)}{1+F(x)}$.
Because $y \ge x$ and $F$ is non-decreasing, the left-hand side is naturally bounded below by $F(x-\psi(x))$. 
Thus, to guarantee local stability for all possible value pairs, it is sufficient that the distribution satisfies:
\begin{equation}\label{eq:local_stability_under_optwages}
    F(x-\psi(x)) \ge \frac{F(x)}{1+F(x)}, \quad \forall\, x \in \R.
\end{equation}
The following lemma establishes that this property is satisfied when the private cost CDF $F$ is concave, representing diminishing marginal returns in acceptance probability.
This requires the support of $F$ to be bounded below, which we normalize to $0$ without loss of generality.

\begin{lemma} \label{lem:local_stability}
Suppose that the lower endpoint of the support of $F$ is $0$, possibly with an atom $p_0=\P(Z=0)$, and that it is concave on the convex hull of its support.
Then, \eqref{eq:local_stability_under_optwages} holds for every $x\in\R$ and greedy ranking is stable under welfare-optimal wages.
\end{lemma}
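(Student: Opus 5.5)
The plan is to split on the sign of $x$ and, for the substantive case $x>0$, compare $F$ with the tangent (supergradient) line of $F$ at the point $a\defeq x-\psi(x)$. The second claim of the lemma then follows from the derivation just before it. There we saw that \eqref{eq:local_stability_under_optwages} together with $y\ge x$ and monotonicity of $F$ gives \eqref{eq:local_stability_prob} at the wages of \Cref{lem:optimal_wages}, and hence stability of greedy ranking under welfare-optimal wages.

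\emph{Trivial cases.} For $x<0$ we have $F(x)=0$, so the right-hand side vanishes and there is nothing to prove. For $x=0$, \eqref{eq:psi_integral} gives $\psi(0)=\int_{-\infty}^0F(s)\,ds=0$, because $F$ vanishes on $(-\infty,0)$. Then $F(0-\psi(0))=p_0\ge p_0/(1+p_0)$. If $x$ is at or beyond the upper endpoint of the support, replacing $x$ by that endpoint $b$ leaves $F(x)=1$ unchanged, while $\psi(x)-\psi(b)=x-b$ keeps $x-\psi(x)$ unchanged. So we may assume $0<x\le b$.

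\emph{Setup for $x>0$.} Write $y\defeq F(x)$, $a\defeq x-\psi(x)$ and $q\defeq F(a)$. Since $F\le 1$ we have $\psi(x)\le x$, so $a\ge 0$. Also $a\le x$, since $\psi\ge0$. The target is $q\ge y/(1+y)$. Because $\psi(x)=\int_0^xF$, the defining relation of $a$ reads
\[
x-a=\int_0^a F(s)\,ds+\int_a^x F(s)\,ds .
\]
The idea is to show that if $q$ were smaller than $y/(1+y)$, the right-hand side would be too small to reach $x-a$ while $F$ still climbs from $q$ at $a$ to $y$ at $x$.

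\emph{Key step and obstacle.} Concavity constrains $F$ from both sides. It lies below its supergradient line at $a$, namely $F(s)\le q+g(s-a)$ for some $g\ge0$, and it lies above chords. The first integral is bounded by $aq$ using monotonicity. The main obstacle is the second integral: the crude bound $F\le y$ on $[a,x]$ is not sharp enough. A quick check shows this, because combining $\int_a^x F\le (x-a)y$ with the chord bound $q\ge (a/x)y$ yields only $(y-q)(1-y)\le q^2$, which is vacuous at $y=1$.

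So I will treat the step as an extremal problem. Fix $x$, $y$ and $q$, and maximize the deficit $x-a-\int_0^xF$ over concave non-decreasing $F$ with $F(a)=q$ and $F(x)=y$. I expect the extremizer to be piecewise linear: constant equal to $q$ on $[0,a]$ (with an atom at $0$), then linear from $(a,q)$ to $(x,y)$. For this $F$ the relation $x-a=aq+(x-a)(q+y)/2$ should force $q\ge y/(1+y)$. The uniform case, where $a=x/2$ and $q=y/2$ at $y=1$, shows the inequality is tight, which supports this choice of extremal.

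Reducing to this extremal shape is the part I expect to need the most care. I would try first to replace $F$ by the upper envelope $\min\{q,\,q+g(s-a)\}$ on $[0,a]$ and by the chord on $[a,x]$, and then check monotonically that each replacement only increases $\int_0^xF$. If that goes through, the explicit computation for the piecewise-linear $F$ finishes the proof.
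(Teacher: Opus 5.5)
There is a genuine gap in the key step. The extremal reduction is invalid, and it throws away exactly the concavity the inequality needs.

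\begin{itemize}
\item On $[a,x]$ a concave $F$ lies \emph{above} its chord. Replacing $F$ there by the chord therefore lowers $\int_a^x F$ rather than raising it.
\item On $[0,a]$, replacing $F$ by its supergradient line is legitimate. But concavity forces $g\ge (y-q)/(x-a)>0$ whenever $y>q$. So the flat choice $F\equiv q$ on $[0,a]$ is incompatible with the chord to its right, and your proposed extremizer (flat, then the chord) is not concave.
\end{itemize}

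The closing computation therefore cannot succeed. With $x=y=1$, the relation $x-a=aq+(x-a)(q+y)/2$ gives $q=(1-a)/(1+a)$. This tends to $0$ as $a\to1$ and equals $1/3<1/2$ at $a=1/2$. Indeed, your extremizer is itself a valid CDF (atom $1/3$ at $0$, remaining mass uniform on $[1/2,1]$) for which the target inequality fails. So your uniform sanity check ($a=x/2$, $q=1/2$) contradicts the proposed extremizer rather than supporting it.

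The missing observation is that $a$ is a mean. Since $x-(x-Z)^+=\min\{Z,x\}$, \eqref{eq:psi_integral} gives $a=x-\psi(x)=\E[\min\{Z,x\}]$. Your supergradient bound $F(s)\le q+g(s-a)$ is the right tool. Evaluate it at $s=\min\{Z,x\}$ and take expectations over $Z$, rather than integrating it against $ds$. The linear term then vanishes, and you get Jensen's inequality $F(a)\ge\E[F(\min\{Z,x\})]$.

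The right-hand side does not depend on the shape of $F$. Concavity rules out atoms in $(0,x]$, so it equals $p_0^2+\int_{(0,x]}F\,dF+F(x)\bigl(1-F(x)\bigr)=F(x)-F(x)^2/2+p_0^2/2$. Writing $F$ for $F(x)$, the algebra $(F-F^2/2)(1+F)-F=F^2(1-F)/2\ge0$ finishes the bound. This is the paper's argument.

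For the stability claim, \eqref{eq:local_stability_prob} says the weights $F(w_t)\prod_{s<t}(1-F(w_s))$ are non-increasing in $t$. The rearrangement inequality then covers all permutations at those wages, not only adjacent swaps.
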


\proof{Proof.}
For $x< 0$, both sides of \eqref{eq:local_stability_under_optwages} are zero.
At $x=0$, the inequality is $p_0\ge p_0/(1+p_0)$.
Fix $x>0$ and let $Z_x = \min\{Z, x\}$.
From \eqref{eq:psi_integral} in \Cref{lem:optimal_wages} we get that
\[ x-\psi(x) = x-\E[\max\{0,x-Z\}] = \E[Z_x]. \]
Concavity of $F$ on the support of $Z_x$ and Jensen's inequality imply $F(x-\psi(x)) \ge \E[F(Z_x)]$. 
Evaluating the right-hand side and accounting for the possible jump $p_0$ at the lower endpoint gives
\begin{align*}
    \E[F(Z_x)] 
    &= p_0^2 + \int_{(0,x]} F(t)dF(t) + (1 - F(x))F(x) \\
    &= F(x) - \frac{1}{2}F(x)^2 + \frac{1}{2}p_0^2.
\end{align*}
The last expression is at least $F(x)/(1+F(x))$, because 
\begin{align*}
    \left(F-\frac{F^2}{2}\right)(1+F) - F = \frac{F^2(1-F)}{2}\ge0,
\end{align*}
and the atom contributes an additional non-negative term.
If the upper support endpoint is finite, extend $F$ to be constant equal to one beyond it and the argument follows since concavity is preserved.
Finally, given that \eqref{eq:local_stability_under_optwages} (and thus \eqref{eq:local_stability_prob}) holds, the overall probability weight $a_t = F(w_t) \prod_{s=1}^{t-1} (1 - F(w_s))$ in \eqref{eq:welfare_def} is decreasing in $t$ and thus pairing them with decreasing values maximizes welfare over all permutations.
This completes the proof.
\hfill\Halmos
\endproof

\subsection{Global Optimality}
\label{subsec:global_optimality}
In the previous subsection, we established that the greedy ranking is locally stable when paired with its optimal wages, provided the cost distribution features diminishing marginal returns.
However, local stability between two workers does not automatically guarantee that the greedy assignment is globally optimal against all possible permutations.

To prove global optimality, we use a swapping argument. 
Consider an arbitrary non-greedy sequence $\vu \in \Pi(\vv)\setminus\{\vv\}$ and identify an adjacent inverted pair, where the worker at the upstream position has a strictly lower observable value than the worker at the downstream position. 
We then evaluate the impact of swapping their positions to place the superior worker first. 
Holding the wages fixed does not establish the desired comparison: a different sequencing must be evaluated under its own optimal wages.
Changing the sequence changes the continuation values and hence optimal wages.
Therefore, we must dynamically recalculate the optimal prices for the newly swapped pair.

Let $t$ and $t+1$ be the positions of an adjacent inverted pair, such that $u_t < u_{t+1}$. We define the marginal values of these two workers over the shared future continuation value as $x = u_t - V(\vu_{t+2:n})$ and $y = u_{t+1} - V(\vu_{t+2:n})$. Because the pair is inverted, we have $x < y$.
Applying the recursion from \Cref{lem:optimal_wages}, we directly evaluate the expected continuation value extracted from this pair in the original non-greedy order:
\begin{align*}
    V(\vu_{t+1:n}) &= V(\vu_{t+2:n}) + \psi(u_{t+1} - V(\vu_{t+2:n})) \\
    &= V(\vu_{t+2:n}) + \psi(y), \text{ and} \\ 
    V(\vu_{t:n})   &= V(\vu_{t+1:n}) + \psi\big(u_t - V(\vu_{t+1:n})\big) \\
                   &= V(\vu_{t+2:n}) + \psi(y) + \psi\big(x - \psi(y)\big).
\end{align*}


If we swap their positions and re-optimize the wages, the downstream worker now has margin $x$, adding $\psi(x)$ to the continuation value $V(\vu_{t+2:n})$. The upstream worker faces a margin of $y - \psi(x)$, resulting in a new expected continuation value of $V(\vu_{t+2:n}) + \psi(x) + \psi(y - \psi(x))$. 

We define the swap difference $D(x,y)$ as the expected welfare difference between the swapped sequence and the original sequence:
\begin{equation}
    D(x,y) \defeq \psi\big(y - \psi(x)\big) + \psi(x) - \psi\big(x - \psi(y)\big) - \psi(y). \label{eq:swap_difference}
\end{equation}
If $D(x,y) \ge 0$ for all $x \le y$, then this swap weakly improves expected welfare. 
The update $z\mapsto z + \psi(u-z)$ is non-decreasing because $\psi$ is 1-Lipschitz.
Thus, an improvement at the swapped pair propagates through all preceding offers when their wages are re-optimized.
Iteratively removing adjacent inversions transforms any sequence into $\vv$ without lowering expected welfare. 
This establishes that no sequence strictly outperforms the greedy sequence.

To guarantee that this swap difference is always non-negative, we impose an additional assumption on the distribution. 
While local stability in \Cref{lem:local_stability} is satisfied by a non-increasing density, global optimality is guaranteed if we further assume this density is convex.

\begin{theorem} \label{thm:global_optimality}
Suppose that the lower endpoint of the support of $F$ is 0, possibly with an atom $p_0=\P(Z=0)$. 
On the open support $(0,b)$, where $b\in(0,\infty]$, suppose that $F$ is atomless and has a continuously differentiable density $f$ that is non-increasing and convex. 
If $b<\infty$, assume that there is no atom at $b$ and allow the density to make one terminal jump to zero there.
Then, $\gre(\vv)=\opt(\vv)$ for all $\vv\in\R^n$.
\end{theorem}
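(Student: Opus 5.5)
By the swapping argument preceding the statement, it suffices to show $D(x,y)\ge 0$ for all $x\le y$, where $D$ is the swap difference in \eqref{eq:swap_difference}. Since the update $z\mapsto z+\psi(u-z)$ is monotone, any weak improvement at the swapped pair propagates upstream. Iteratively removing adjacent inversions then gives $V(\vu)\le V(\vv)$ for every $\vu\in\Pi(\vv)$, so $\gre(\vv)=\opt(\vv)$. The whole proof therefore reduces to a one-dimensional inequality about $\psi(x)=\int_{-\infty}^x F$. Here $\psi'=F$ and $\psi''=f$, with $f$ non-increasing and convex on $(0,b)$.

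I would fix $x$ and regard $D(x,\cdot)$ as a function of $y\ge x$. At $y=x$ it vanishes identically. It then suffices to show $\partial_y D(x,y)\ge 0$ for $y>x$, i.e.
\[
F\big(y-\psi(x)\big) \;\ge\; F(y)\Big(1-F\big(x-\psi(y)\big)\Big).
\]
This is the pair-stability condition \eqref{eq:local_stability_prob}, now stated at the re-optimized wages of both orders, since the right-hand side depends on $y$ through $\psi(y)$ as well.

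For boundary cases:
\begin{itemize}
\item If $x\le 0$, then $\psi(x)=0$ and the left side is $F(y)$, so the inequality is trivial.
\item If $x-\psi(y)<0$, the right side is $F(y)$ and we need $F(y-\psi(x))\ge F(y)$. This fails in general, so this route must be combined with monotonicity in $x$ instead.
\end{itemize}
A cleaner plan is therefore to differentiate in both variables. Write $D(x,y)=G(x,y)-G(y,x)$ with $G(x,y)=\psi(x)+\psi(y-\psi(x))$, the value of order ($y$ first, then $x$). I would show that $G(x,y)-G(y,x)$ is non-decreasing along rays moving $y$ away from $x$. Using $\psi''=f$, I would expand to second order around the diagonal and express the derivative as an integral of $f$ against a signed kernel. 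The required sign should then follow from $f$ being non-increasing (first-order terms, as in \Cref{lem:local_stability}) together with $f$ being convex (second-order terms).

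Concretely, I would substitute $a=\psi(x)$ and $b'=\psi(y)$ and write each $\psi$ difference as an integral, for example
\[
\psi(y-a)-\psi(y)=-\int_{y-a}^{y}F,\qquad \psi(x)-\psi(x-b')=\int_{x-b'}^{x}F.
\]
This gives
\[
D=\int_{x-b'}^{x}F-\int_{y-a}^{y}F+(a-b')\cdot 0,
\]
which reduces to comparing $F$-averages over the windows $[x-\psi(y),x]$ and $[y-\psi(x),y]$. The window at $y$ is shorter, because $\psi(x)\le\psi(y)$, but it sits where $F$ is larger. Since $F$ is concave, I would use a Hermite–Hadamard/majorization argument showing that $\int_{x-\psi(y)}^x F\ge\int_{y-\psi(x)}^y F$. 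This needs the relation $\psi(y)-\psi(x)=\int_x^y F$ to trade window length against window location. Concavity of $F$ alone likely gives only local stability; the convexity of $f$ (i.e.\ $F'''\ge 0$) is what controls the error between the window-length difference $\int_x^yF$ and the shift $y-x$. I expect this comparison to be the main obstacle. I would first attempt it via a second-order Taylor expansion with integral remainder in $y$ at fixed $x$, and then handle the truncation at $0$ and the terminal jump at $b$ by the same constant extension used in \Cref{lem:local_stability}.
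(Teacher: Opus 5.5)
You have the right reduction to $D(x,y)\ge0$ for $x\le y$, via adjacent swaps and the monotone update $z\mapsto z+\psi(u-z)$; this matches the paper. Your window identity $D(x,y)=\int_{x-\psi(y)}^{x}F-\int_{y-\psi(x)}^{y}F$ is also correct. But the proposal stops exactly where the content of the theorem begins: you never prove the window inequality, and the routes you sketch toward it do not work as stated.

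\emph{Monotonicity in $y$ is false.} Your first route, and equally your ``rays moving $y$ away from $x$'' at fixed $x$, would need $D(x,\cdot)$ to be non-decreasing. This fails in general. For Exponential costs and any fixed $x>0$, we have $D(x,x)=0$, then $D(x,y)=\int_{y-\psi(x)}^{y}(1-F)>0$ whenever $\psi(y)\ge x$, and $D(x,y)\to0$ as $y\to\infty$.

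\emph{Hermite--Hadamard cannot close it.} The obvious bounds are a trapezoid lower bound on the left window and a midpoint upper bound on the right one. On the diagonal $x=y$ the two windows coincide, and for strictly concave $F$ these bounds already go the wrong way. Since $D$ vanishes on the diagonal, any argument must be sharp to first order in $y-x$. It must also cope with window lengths $\psi(x),\psi(y)$ that themselves depend on $F$. You give no mechanism for either, and none for how convexity of $f$ enters.

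The missing idea is to fix $y$ and vary $x$ instead. For $x\le\psi(y)$ the term $\psi(x-\psi(y))$ vanishes, and $D\ge0$ follows because $\psi$ is 1-Lipschitz; in particular $D(\psi(y),y)\ge0$. On $[\psi(y),y]$, with $y<b$, one computes
\[
\begin{aligned}
\partial_x^3D={}&f'(x)\big[1-F(y-\psi(x))\big]+3F(x)f(x)f\big(y-\psi(x)\big)\\
&-F(x)^3f'\big(y-\psi(x)\big)-f'\big(x-\psi(y)\big).
\end{aligned}
\]
This is non-negative exactly because $f'\le0$ and $f'$ is non-decreasing. The first and last terms combine to $f'(x)-f'(x-\psi(y))-f'(x)F(y-\psi(x))$. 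So $\partial_xD(\cdot,y)$ is convex. \Cref{lem:local_stability} gives $\partial_xD(y,y)\le0$, so $\partial_xD(\cdot,y)$ changes sign at most once, from positive to negative. Hence $D(\cdot,y)$ is quasi-concave and minimized at an endpoint, where $D(\psi(y),y)\ge0$ and $D(y,y)=0$.

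Separately, you cannot absorb the terminal density jump by constant extension in any derivative- or Taylor-based argument, since $f$ is not differentiable at $b$. The paper treats $y\ge b$ on its own. There $\psi(y)=y-\mu$ with $\mu=\E[Z]$, and $D$ is bounded below by $\kappa(c)=\psi(\mu+c)-\psi(\mu-c)-c$, where $c=y-x$. This $\kappa$ is concave because $\kappa''(c)=f(\mu+c)-f(\mu-c)\le0$, and it is non-negative at $c=0$ and $c=\mu$.
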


\proof{Proof.}
Fix $x\le y$. 
We first note that $\psi(x)= 0$ for $x\le 0$, since $z\ge 0$.
Thus, if $y\le 0$, then $D(x,y)=0$. If $y>0$ and $x\le\psi(y)$, then $x-\psi(y)\le 0$ and we claim that
\begin{align}\label{eq:trivial_case}
    D(x,y) = \psi(y-\psi(x))+\psi(x)-\psi(y)\ge 0.
\end{align}

Indeed, $\psi(x-\psi(y))=0$ when $x-\psi(y)\le 0$, because $\psi$ is non-decreasing and 1-Lipschitz we get
\begin{align*}
    \psi(y)-\psi(y-\psi(x))=\int_{y-\psi(x)}^y\psi'(t)\,dt\le\int_{y-\psi(x)}^y1\,dt = \psi(x).
\end{align*} 
In particular, $D(\psi(y),y)\ge 0$.

It remains to consider $\psi(y)<x<y$. First, suppose either $b=\infty$ or $y<b$, in which case all arguments below lie in the smooth part of the support.
%
%
We evaluate the first partial derivative with respect to $x$:
\begin{align}
    \frac{\partial D}{\partial x}(x,y) = \psi'(x)\big[1 - \psi'\big(y - \psi(x)\big)\big] - \psi'\big(x - \psi(y)\big). \label{eq:first_derivative}
\end{align}

At $x=y$, \Cref{lem:local_stability} gives $\frac{\partial D}{\partial x}(y,y) = \psi'(y)[1 - \psi'(y - \psi(y))] - \psi'(y - \psi(y)) \le 0$.
Two further differentiations yield
\begin{align}
    \frac{\partial^3 D}{\partial x^3}(x,y) 
    = &f'(x)\big[1 - F\big(y - \psi(x)\big)\big] \label{eq:third_derivative_1}\\ 
    &+ 3F(x)f(x)f\big(y - \psi(x)\big)\label{eq:third_derivative_2} \\ 
    &- F(x)^3 f'\big(y - \psi(x)\big) \label{eq:third_derivative_3}\\ 
    &- f'\big(x - \psi(y)\big). \label{eq:third_derivative_4}
\end{align}
The middle two terms \eqref{eq:third_derivative_2} and \eqref{eq:third_derivative_3} are non-negative because $f'\le 0$.
The remaining two can be written as 
\begin{align*}
    f'(x)-f'(x-\psi(y)) - f'(x)F(y-\psi(x)),
\end{align*}
which is also non-negative because $f'$ is non-decreasing.
Hence, the partial derivative $\frac{\partial D}{\partial x}(x,y)$ is a convex function of $x$.
A convex derivative that ends at a non-positive value at $x = y$ can change sign from positive to negative at most once. 
Consequently $D(\cdot,y)$ is quasi-concave on $[\psi(y),y]$.
Using both endpoint values, $D(\psi(y),y)\ge0$ from \eqref{eq:trivial_case} and $D(y,y)=0$, yields $D(x,y)\ge0$ throughout this interval.

For completeness, suppose $b<\infty$ and $y\ge b$, and let $\mu=\E[Z]$ and set $c=y-x$. 
The case $x\le\psi(y)=y-\mu$ was already handled, so take $0\le c<\mu$. 
%
%
Since $Z\le b\le y$, we have $c \ge \E[(Z-x)^+] = \psi(x) - x + \mu$.
With $\delta=c-\psi(x) + x - \mu$, direct substitution gives
\[
D(x,y)=\psi(\mu+\delta)-\psi(\mu-c)-\delta.
\]
For fixed $c$, the right-hand side is non-increasing in $\delta\in[0,c]$, so it is bounded below by $\kappa(c)\defeq\psi(\mu+c)-\psi(\mu-c)-c$. 
Moreover, $\kappa(0)=0$, $\kappa(\mu)=\E[(Z-2\mu)^+]\ge0$, and, wherever differentiated,
\[
\kappa''(c)=f(\mu+c)-f(\mu-c)\le0.
\]
Thus $\kappa$ is concave on $[0,\mu]$, including across the permitted terminal density drop, and lies above the chord joining its non-negative endpoint values. 
Hence $D(x,y)\ge0$ also in the finite-support boundary region.

Every adjacent inversion can therefore be swapped and the two wages re-optimized without lowering welfare. 
Iterating these swaps produces the greedy sequence, which proves the result.
\hfill\Halmos
\endproof

Theorem \ref{thm:global_optimality} establishes that the greedy sequence is globally optimal when the cost distribution exhibits a non-increasing, convex density. 
This structural condition corresponds to environments where wage incentives yield diminishing marginal returns on acceptance probabilities at a decelerating rate.
Intuitively, a decreasing density makes selective early offers effective: lowering an offer to account for continuation value retains enough acceptance probability to favor better matches.
Convexity adds a restriction on how fast that sensitivity can change.
Thus, the two shape assumptions control both the immediate allocation probabilities and the effect of re-optimizing offers.
While these properties are satisfied by several standard distributions, they are sufficient rather than strictly necessary.
%
%
The next section gives a distribution-free performance guarantee when these conditions are not satisfied.

\section{Worst-case Guarantees}
\label{sec:worst_case_guarantees}

We benchmark the greedy ranking against two baselines: the online optimum $\opt(\vv) = \max_{\vu \in \Pi(\vv)} V(\vu)$, which optimizes over all sequences, and the offline prophet $\off(\vv) \defeq \E[\max_{i \in I}\{(v_i - Z_i)^+\}]$, which observes all cost realizations in hindsight.
The classical prophet inequality establishes that even if we cannot choose the order, there is a universal lower bound of $1/2$: for any distribution $F$, net values $\vv$, and sequence length $n$, an optimal stopping policy for any fixed order achieves at least half the prophet benchmark \citep{krengel1977semiamarts, samuel1984comparison}.

This bound is tight in the general heterogeneous setting, where the classic worst-case instance constructs variables with vastly different variances. 
However, this instance cannot be cast under our location-shift structure, $X_i = (v_i - Z_i)^+$ with $Z_i \sim F$, because the untruncated welfare variables share a common additive noise distribution. 
Since the greedy ranking evaluates workers in decreasing order of observable values ($v_1 \ge \dots \ge v_n$), the location-shift structure guarantees that the expected positive surplus from the current worker bounds that of any subsequent worker. 
By exploiting this monotonicity, we prove a tighter lower bound for finite sequences. Specifically, our bound is parameterized by $n$ and performs better for small $n$, even though it also converges to the standard $1/2$ in the worst case as $n\to\infty$.

\begin{theorem}\label{thm:worst_case}
For any $n \ge 1$, net values $v_1 \ge v_2 \ge \dots \ge v_n$, and cost distribution $F$ with $\E[Z^-]<\infty$,
\begin{equation}
V(\vv) \ge \frac{n}{2n-1}\off(\vv).
\end{equation}
\end{theorem}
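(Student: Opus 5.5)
The plan is to lower-bound $V(\vv)$ by the welfare of a single-threshold policy run in the greedy order, and to compare it with the usual upper bound on $\off(\vv)$. The improvement from $1/2$ to $n/(2n-1)$ should come from monotonicity along the greedy order, combined with a Chebyshev-type sum inequality.

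\textbf{Setup and the two bounds.} Write $X_i=(v_i-Z_i)^+$. For a threshold $\tau>0$, consider the policy that offers worker $i$ the net wage $v_i-\tau$. Worker $i$ then accepts iff $X_i\ge\tau$, which happens with probability $p_i=F(v_i-\tau)$. By the Bellman recursion \eqref{eq:bellman}, $V(\vv)$ is at least the value of any such fixed wage vector.

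Let $r_i=\prod_{j<i}(1-p_j)$ be the probability of reaching worker $i$, and $q=\prod_{i}(1-p_i)$. Conditional on reaching $i$, the expected welfare collected there is $p_i\tau+g_i$, where
\[
g_i\defeq\E[(X_i-\tau)^+]=\psi(v_i-\tau)
\]
by \eqref{eq:psi_integral}. Hence
\[
V(\vv)\ge \tau\sum_i r_ip_i+\sum_i r_ig_i=(1-q)\tau+\sum_i r_i g_i .
\]
On the prophet side, $\max_i X_i\le \tau+\sum_i (X_i-\tau)^+$ gives $\off(\vv)\le \tau+S$ with $S=\sum_i g_i$.

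\textbf{Using the greedy order.} Since $v_1\ge\dots\ge v_n$ and $\psi$ is non-decreasing, the $g_i$ are non-increasing in $i$. The reach probabilities $r_i$ are non-increasing as well. Chebyshev's sum inequality therefore gives
\[
\sum_i r_ig_i\ge \frac1n\Big(\sum_i r_i\Big)S .
\]
Using $r_1=1$ and $r_i\ge q$, we get $\sum_i r_i\ge 1+(n-1)q$. Altogether,
\[
V(\vv)\ge (1-q)\tau+\frac{1+(n-1)q}{n}\,S .
\]
Equating the two coefficients gives $q^\star=(n-1)/(2n-1)$, at which both equal $n/(2n-1)$. So if $\tau$ can be chosen with $q=q^\star$, then $V(\vv)\ge \frac{n}{2n-1}(\tau+S)\ge\frac{n}{2n-1}\off(\vv)$.

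\textbf{Choosing $\tau$ (the main technical step).} The function $q(\tau)$ is non-decreasing in $\tau$, tends to $1$ as $\tau\to\infty$, but may jump because $F$ can have atoms, and may already exceed $q^\star$ as $\tau\downarrow0$. I would handle these issues as follows.

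\begin{itemize}
\item \emph{Atoms.} Randomize acceptance at the boundary $X_i=\tau$, i.e.\ randomize the wage between $v_i-\tau$ and the slightly lower wage. This makes $q$ vary continuously, and the randomized policy is still dominated by the optimal DP value, since it is a mixture of deterministic wage vectors. The analysis above goes through with $p_i$ replaced by the randomized acceptance probabilities, so $q=q^\star$ can be hit exactly.
\item \emph{Small $\tau$.} If $q>q^\star$ even as $\tau\downarrow 0$, take the limit $\tau\to0$ and drop the $\tau$ term. Then $\off(\vv)\le S$ and
\[
V(\vv)\ge \frac{1+(n-1)q}{n}S\ge \frac{n}{2n-1}S,
\]
because $q\ge q^\star$.
\end{itemize}

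The point I expect to require the most care is verifying that the monotonicity of $r_i$ and $g_i$ survives this randomization. It does, as long as the tie-breaking randomization is applied uniformly across workers, because then the effective acceptance probability $p_i$ remains a function of $v_i-\tau$ that is monotone in $v_i$. I would also check that $\E[Z^-]<\infty$ makes all quantities finite.
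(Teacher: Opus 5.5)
Your proof is correct, but it follows a different route from the paper's. The paper works only with the optimal greedy recursion. Writing $G_k=V(\vv_{k:n})$, it inserts the optimal continuation value after the top worker, $c=G_2$, into $\max_i X_i\le c+\sum_i(X_i-c)^+$. This gives $\off(\vv)\le G_1+S_2$ with $S_2=\sum_{j\ge 2}\psi(v_j-G_2)$. It then bounds $S_2$ in two ways: by $G_2$, telescoping the Bellman update with $G_k\ge G_{k+1}$, and by $(n-1)\psi(v_1-G_2)$, using $v_1\ge v_j$. Averaging these with weights $n-1$ and $1$ gives $nS_2\le (n-1)G_1$.

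You instead lower-bound $V(\vv)$ by an explicit suboptimal policy, a single welfare threshold $\tau$ run in greedy order. You balance it Samuel-Cahn style, with target no-acceptance probability $q^\star=(n-1)/(2n-1)$ in place of $1/2$.

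The paper's argument is shorter. It needs no auxiliary policy, no treatment of atoms, and no case split on $q(0^+)$. Yours gives a stronger conclusion that fits the paper's narrative: net wage $v_i-\tau$ is total wage $v-\tau$, so an essentially flat total wage with no raises already secures $n/(2n-1)$ of $\off(\vv)$. Both arguments use the ranking only weakly. The paper needs only $v_1\ge v_j$. Your Chebyshev step can likewise be replaced by $\sum_i r_ig_i\ge g_1+q(S-g_1)\ge\frac{1+(n-1)q}{n}S$, which needs only $g_1\ge S/n$.

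Two technical remarks. First, the concern in your last paragraph does not arise. The $r_i$ are non-increasing for any acceptance probabilities, and $g_i=\psi(v_i-\tau)$ does not depend on tie-breaking, since a tie contributes $X_i-\tau=0$.

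Second, mixing $v_i-\tau$ with a slightly lower wage is only approximate tie-breaking when $F$ has mass just below the atom, so that step needs a limit. You can avoid randomization altogether, because for fixed $\tau$ your bound $(1-q)\tau+\frac{1+(n-1)q}{n}S$ is affine in $q$. When $q(0^+)<q^\star$, pick $\tau>0$ with $q(\tau)\le q^\star\le q(\tau^+)$. Then use two policies: the tie-accepting policy, and the tie-rejecting one, obtained as the limit of wages $v_i-\tau'$ as $\tau'\downarrow\tau$. The bound at $q^\star$ is a convex combination of these two valid lower bounds, so it also lower-bounds $V(\vv)$. Otherwise your small-$\tau$ case applies unchanged.
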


\proof{Proof.}
For $n=1$, the equality holds by \Cref{lem:optimal_wages}.
Suppose $n\ge 2$.
Let $G_k \defeq V(\vv_{k:n})$ denote the greedy continuation value from the $k$-th worker onwards, with $G_{n+1} = 0$. By \Cref{lem:optimal_wages}, $G_k = G_{k+1} + \psi(v_k - G_{k+1})$ for each $k \in [n]$, and $G_1 = V(\vv)$.

For any constant $c \ge 0$ and random variables $X_i = (v_i - Z_i)^+$, the pointwise bound $\max_i X_i \le c + \sum_i (X_i - c)^+$ holds. Setting $c = G_2 \ge 0$ and taking expectations, the identity $\E[((v_i - Z_i)^+ - c)^+] = \psi(v_i - c)$ yields
\begin{equation*}
\off(\vv) \le G_2 + \psi(v_1 - G_2) + \sum_{j=2}^n \psi(v_j - G_2).
\end{equation*}
The first two terms equal $G_1$ by the recursion for $G_1$. Defining $S_2 \defeq \sum_{j=2}^n \psi(v_j - G_2)$,
\begin{equation*}
\off(\vv) \le G_1 + S_2.
\end{equation*}

We derive two upper bounds on $S_2$. First, since $v_1 \ge v_j$ for all $j \ge 2$ and $\psi$ is non-decreasing,
\begin{equation*}
S_2 \le (n-1)\,\psi(v_1 - G_2).
\end{equation*}
Second, define $S_k = \sum_{j=k}^n \psi(v_j - G_k)$ for $k \ge 2$, with $S_{n+1} = 0$. Since $G_k \ge G_{k+1}$, monotonicity of $\psi$ gives $\psi(v_j - G_k) \le \psi(v_j - G_{k+1})$ for $j \ge k+1$. The same monotonicity also bounds the leading term, so
\begin{equation*}
S_k \le \psi(v_k-G_{k+1})+S_{k+1}
     =(G_k-G_{k+1})+S_{k+1}.
\end{equation*}
Telescoping from $k=2$ to $n$ yields $S_2 \le G_2$.

Combining $(n-1)$ copies of $S_2 \le G_2$ with one copy of $S_2 \le (n-1)\,\psi(v_1 - G_2)$,
\begin{equation*}
n \cdot S_2 \le (n-1)\big[G_2 + \psi(v_1 - G_2)\big] = (n-1)\,G_1,
\end{equation*}
so $S_2 \le \frac{n-1}{n}\,G_1$. Substituting back,
\begin{equation*}
\off(\vv) \le G_1 + \tfrac{n-1}{n}\,G_1 = \tfrac{2n-1}{n}\,G_1,
\end{equation*}
completing the proof.
\hfill\Halmos
\endproof



As $n \to \infty$, the lower bound $n/(2n-1)$ converges to $1/2$. 
To establish that this finite-horizon bound is tight, we evaluate greedy ranking and the optimal online policy $\opt(\vv)$ under a Bernoulli distribution.

For $n\ge2$, let $q=1-p\in(0,1/n]$, and let $Z\in\{0,1\}$ with $\P(Z=1)=p$. Set
\[
v_k=1+(n-k)q,\qquad k\in[n].
\]
Here $\psi(w)=qw$ on $[0,1]$ and $\psi(w)=w-p$ for $w\ge1$. Backward induction gives $G_k=(n-k+1)q$: every greedy margin is exactly one. Thus $\gre(\vv)=nq$.

For comparison, consider the feasible order $(v_2,v_3,\ldots,v_n,v_1)$. Its last-stage value is $\psi(v_1)=nq$. At its preceding position $j\in[n-1]$, backward induction gives the wage
\[
w_j=1-nq p^{\,n-j-1}\in[0,1]
\]
and continuation value $(n-j)q+nq p^{\,n-j}$. In particular, this order has value
\[
H=(n-1)q+nq p^{n-1}\le\opt(\vv)\le\off(\vv).
\]
Applying \Cref{thm:worst_case} and the feasible-policy lower bound gives
\[
\frac{n}{2n-1}
\le\frac{\gre(\vv)}{\off(\vv)}
\le\frac{\gre(\vv)}{\opt(\vv)}
\le\frac{\gre(\vv)}{H}
=\frac{n}{(n-1)+np^{n-1}}.
\]
As $p\uparrow1$, the rightmost expression converges to $n/(2n-1)$. Both ratios therefore have this limit, proving tightness against both benchmarks for every $n\ge2$. For $n=1$, greedy equals both benchmarks.

In the following section, we explore the empirical robustness of the greedy heuristic when the assumptions in \Cref{thm:global_optimality} are not satisfied.

\section{Numerical Results}
\label{sec:simulations}

In \Cref{sec:optimality}, we establish that greedy ranking is welfare-optimal when the private cost distribution exhibits a non-increasing and convex probability density function.
%
%
In this section, we study the empirical performance of the greedy ranking policy across a broad class of distribution families that do not satisfy our sufficient conditions from \Cref{thm:global_optimality}.
We find no counterexample for several such families.
We then turn to three families for which greedy can be suboptimal and search for instances with larger welfare losses.
Even in these deliberately adverse cases, the lowest performance ratios we found remain well above the universal guarantee in \Cref{sec:worst_case_guarantees}.
Throughout this section, every ranking uses its own welfare-optimal wages.

\subsection{Simulation setup}

A \textit{counterexample} to greedy optimality is a set of worker values for which another ranking yields higher expected welfare.
The swap argument in \Cref{subsec:global_optimality} reduces the question to two workers: for values $x<y$, the swap difference $D(x,y)$ from \eqref{eq:swap_difference} is the welfare of the greedy sequence $(y,x)$ minus the welfare of the reverse sequence $(x,y)$.
A negative value is a counterexample.
We search across the center and tails of each tested distribution, using a fine grid and optimization between grid points, and recheck the most adverse candidates at high precision.
We find no counterexample for Normal, Log-normal, Log-logistic, or Gumbel (for maxima) costs within the domains in \Cref{tab:distribution_robustness}. 
Together, these searches evaluate more than five million grid points. \Cref{sec:search_protocol} gives the complete design and accuracy checks.

\begin{table}[H]
\centering
\small
\begin{tabular}{@{}lll@{}}
\hline
\textbf{Cost family} & \textbf{Shape range searched} & \textbf{Result} \\
\hline
Normal & No shape parameter & No counterexample found \\
Log-normal & $0.1\le\sigma\le3$ & No counterexample found \\
Log-logistic & $1.01\le k\le20$ & No counterexample found \\
Gumbel for maxima & No shape parameter & No counterexample found \\
\hline
\end{tabular}
\caption{Families with no counterexample found. The search covers probabilities from $10^{-12}$ to $1-10^{-12}$ and, for nonnegative costs, the full remaining band of lower worker values after excluding a region where greedy is provably better. Location and scale changes are handled analytically; only the displayed shape parameters require separate searches.}
\label{tab:distribution_robustness}
\end{table}

\subsection{When Greedy is Suboptimal}

Counterexamples can be constructed whenever the mean cost lies below the median. 
For a continuous distribution with finite mean $\mu$, the condition (proved in \Cref{lem:mean-median-obstruction}) is
\begin{align}
F(\mu)<\frac12
\quad\Longrightarrow\quad
\text{greedy is strictly suboptimal for some two-worker instance.}
\label{eq:failure-condition-main}
\end{align}
The construction uses two high, closely spaced match values. Under greedy ranking, the first wage is then close to the mean cost and is accepted with probability below one half. 
Approaching the lower-value worker first can improve welfare by preserving the better match as a fallback. 
Three families illustrate the condition across different cost supports.
Beta$(\alpha,1)$ has compact support $[0,1]$ and density $f(z)=\alpha z^{\alpha-1}$. 
For $0<\alpha\le1$, the density is non-increasing and convex, so \Cref{thm:global_optimality} applies directly.
For every $\alpha>1$, the mean lies below the median and counterexamples exist. 
Thus this family has a sharp transition at $\alpha=1$ (see \Cref{prop:beta-alpha-one-phase}).
Weibull costs have support $[0,\infty)$ and CDF $F_k(z)=1-e^{-z^k}$. 
Greedy is optimal for $k\le1$ by \Cref{thm:global_optimality}. The mean falls below the median when $k>k_0$, where
\[
\bigl[\Gamma(1+1/k_0)\bigr]^{k_0}=\log 2,
\qquad k_0\approx3.43954.
\]
Our search finds no counterexample for the tested shapes between $1$ and $k_0$, where optimality remains unresolved.

Gumbel for minima has support $\mathbb R$ and CDF $F(z)=1-e^{-e^z}$. Its mean $-\gamma_{\rm E}$ is below its median $\log(\log 2)$, so counterexamples also exist. 
This contrasts the search outcome for Gumbel for maxima. 
Moreover, if $Z_k$ is Weibull, then $k(Z_k-1)$ converges to Gumbel for minima. 
This links the two families: Gumbel for minima describes the welfare comparisons that arise as Weibull costs concentrate near one (see \Cref{sec:failure_proofs,prop:weibull-gumbel-limit}).

\subsection{How Much Welfare Does Greedy Lose?}

For each tested value vector $\vv$, we compute the ratio $\gre(\vv)/\opt(\vv)$, where the online optimum can choose any ranking and optimize its wages.
We search for low ratios, computing the best ranking for each candidate dynamic programming over all subsets of workers.
Details for the numerical search over value vector appear in \Cref{sec:performance_protocol}.
With two workers, the ratio gradually decreases for parameters beyond those that break the mean-median condition in \eqref{eq:failure-condition-main}.
For Weibull, the lowest ratio found declines from $99.9\%$ at $k=3.5$ to $97.9\%$ at $k=10$ and $96.5\%$ at $k=100$, close to the $96.3\%$ found for Gumbel (for minima). 
For Beta, it falls to $96.8\%$ at $\alpha=2$ and $91.6\%$ at $\alpha=1000$.

\begin{figure}[]

\centering
\begin{subfigure}[t]{0.49\textwidth}
\centering
\includegraphics[width=\linewidth]{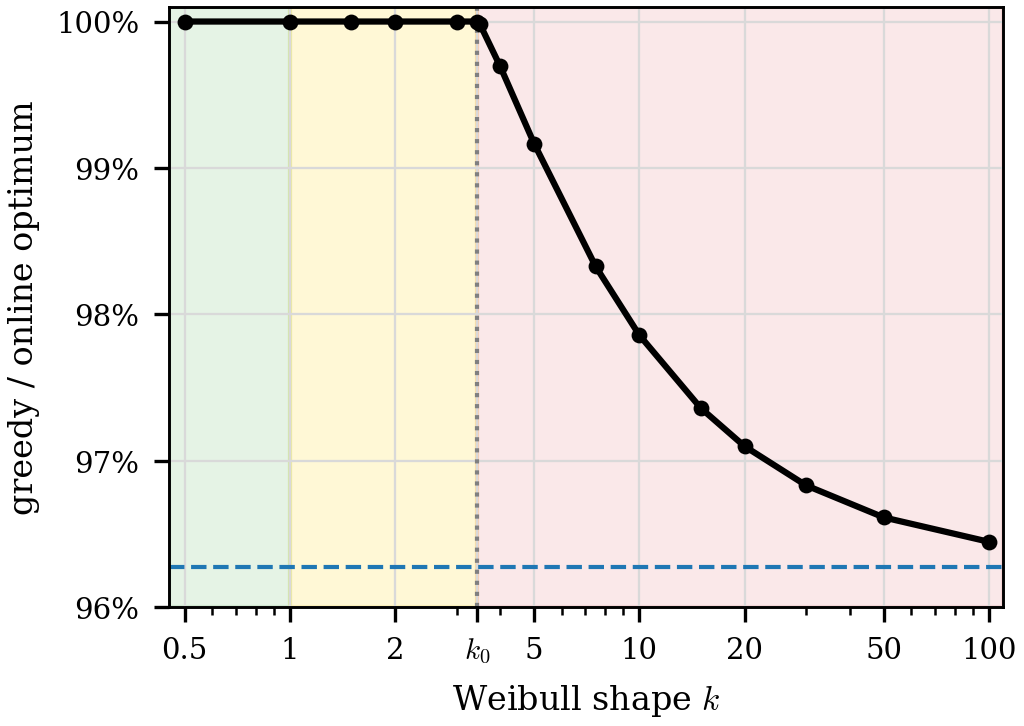}
\caption{Two workers: increasing Weibull shape.}
\label{fig:weibull_by_shape}
\end{subfigure}
\hfill
\begin{subfigure}[t]{0.49\textwidth}
\centering
\includegraphics[width=\linewidth]{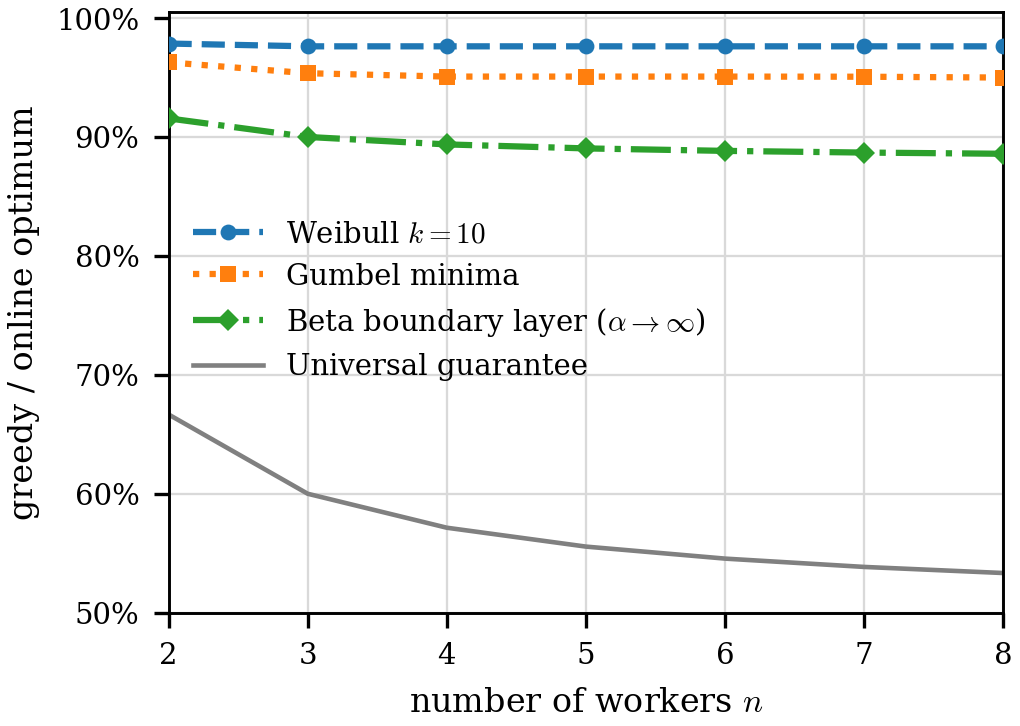}
\caption{Increasing the number of workers.}
\label{fig:failure_by_horizon}
\end{subfigure}
\caption{Lowest welfare ratios found when greedy can be suboptimal. In panel (a), the shaded regions indicate proved optimality (green), unresolved shapes with no counterexample found (yellow), and shapes satisfying the mean--median obstruction (red); $k_0$ marks the crossing. The dashed line is the lowest two-worker ratio found for Gumbel minima. Panel (b) compares Weibull with $k=10$, Gumbel minima, and the limiting Beta experiment as $\alpha\to\infty$. The solid gray line is the universal guarantee $n/(2n-1)$.}
\label{fig:suboptimal_family_performance}
\end{figure}

Adding workers produces further, but modest, losses. 
At eight workers, the lowest ratios found are $97.6\%$ for Weibull with $k=10$ and $95.0\%$ for Gumbel minima.
Beta produces the largest loss: as costs concentrate at their upper endpoint, the ratio falls from $91.566\%$ with two workers to $88.589\%$ with eight. 
The experiment keeps match values within a shrinking neighborhood of that endpoint, where ranking continues to matter (see \Cref{sec:beta-alpha-one} for additional details);

For comparison, the universal guarantee at eight workers is only $53.3\%$. 
The welfare cost of greedy in these deliberately adverse families is therefore much smaller than the distribution-free worst case suggests. 
It is worth noting that these are the lowest ratios found, not stronger guarantees, and more severe instances may exist. 
Nonetheless, greedy retains most of the optimal welfare even in families selected specifically because it can be suboptimal.

\section{Concluding Remarks and Future Work}

This paper studies the sequential offering problem faced by on-demand platforms, where the sequence of workers and the dynamic wage trajectory must be jointly optimized. 
By identifying a location-shift structure arising from the observable heterogeneity among workers, we bridge the analytical gap between trivial i.i.d. environments and computationally intractable, arbitrarily heterogeneous settings.

Our main result establishes that when the unobservable cost distribution exhibits a non-increasing and convex density, the exact optimal policy is structurally simple: the platform should rank workers greedily by their observable net values and optimize wages dynamically via backward induction. 
In these settings, sending initial low offers to inferior matches cannot improve upon a simple greedy ranking with optimized wages.

The numerical results extend the practical case for greedy ranking.
We find no counterexample for several relevant distributions, and even deliberately adverse instances in families with proved sub-optimality retain much more welfare than the worst-case guarantee.
These results motivate sharper guarantees for specific cost distributions while preserving a simple operational prescription: rank by match value and optimize wages to account for the value of later offers.

Several promising directions remain for future research. 
While our model focuses on sequential search for a single job, extending this framework to incorporate multiple jobs, parallel offerings, or correlated values would capture additional operational complexities. 
Furthermore, integrating online learning into the model to account for shifts in the platform's belief over the cost distribution $F$ presents a valuable avenue for real-world platform operations (see \Cref{sec:decaying_markets} for a natural extension of our result to known deterministic deterioration in cost distributions along the offering sequence).

\clearpage
\bibliographystyle{informs2014} 
\bibliography{DraftBib} 

\ECSwitch



%
%
%

\begin{APPENDICES}
\crefalias{section}{appendix}
\crefalias{subsection}{appendix}
\crefalias{subsubsection}{appendix}

\section{Additional Optimality Results}
\label{sec:additional_optimality}
The two cases below extend optimality beyond the density conditions of \Cref{thm:global_optimality} in different ways. Half-normal costs retain the adjacent-swap proof, with an additional bound replacing density convexity. Logistic costs yield the stronger conclusion that every ranking has the same welfare.

\subsection{Half-normal Costs}
\label{sec:half_normal}
The density of a Half-normal distribution is decreasing but is concave near zero, so it falls outside \Cref{thm:global_optimality}. Nevertheless, the same swap argument applies because its positive derivative terms dominate the potentially negative term directly.

\begin{theorem}\label{thm:half_normal}
If $Z=\mu+\sigma|N|$, where $N$ is standard Normal and $\sigma>0$, greedy ranking with welfare-optimal wages is globally optimal for every finite worker pool and every vector of real net values.
\end{theorem}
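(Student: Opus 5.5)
We follow the adjacent-swap argument from the proof of \Cref{thm:global_optimality}, changing only the third-derivative step.

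First we normalize. A location shift $Z\mapsto Z-\mu$ changes every net value by the same constant $\mu$. A scale change $Z\mapsto Z/\sigma$ rescales all values, wages and welfare by $\sigma$. Neither alters the ranking comparison, so we may take $Z=|N|$. This gives support $[0,\infty)$, no atom ($p_0=0$), density $f(z)=2\phi(z)$, and $F(z)=2\Phi(z)-1$. The density is non-increasing and log-concave, so $F$ is concave, and \Cref{lem:local_stability} applies. Hence $\partial_x D(y,y)\le 0$, as before. The cases $y\le 0$ and $x\le\psi(y)$ go through verbatim via \eqref{eq:trivial_case}, since those arguments used only that $\psi$ is non-decreasing and 1-Lipschitz. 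The support is unbounded, so the boundary case $y\ge b$ does not arise.

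It remains to handle $\psi(y)<x<y$. Here we again aim to show that $x\mapsto \partial_x D(x,y)$ is convex, i.e.\ that the third derivative \eqref{eq:third_derivative_1}--\eqref{eq:third_derivative_4} is non-negative. Quasi-concavity of $D(\cdot,y)$ on $[\psi(y),y]$, together with the endpoint values $D(\psi(y),y)\ge0$ and $D(y,y)=0$, then finishes the proof exactly as before.

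The obstacle is the fourth term, $-f'(x-\psi(y))$, combined with \eqref{eq:third_derivative_1}. Terms \eqref{eq:third_derivative_2} and \eqref{eq:third_derivative_3} remain non-negative because $f'\le0$. Write $a=x-\psi(y)$ and $c=y-\psi(x)$, and note $0<a<x\le c$. Using $f'(z)=-zf(z)$, the sum of \eqref{eq:third_derivative_1} and \eqref{eq:third_derivative_4} equals
\begin{equation*}
a f(a) - x f(x)\bigl(1-F(c)\bigr).
\end{equation*}
This can be negative, because $zf(z)$ increases on $[0,1]$ and decreases afterwards. So we must instead prove the full inequality
\begin{equation*}
a f(a)+3F(x)f(x)f(c)+F(x)^3 c f(c)\ \ge\ x f(x)\bigl(1-F(c)\bigr).
\end{equation*}

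We split into cases. If $a\ge 1$, or more generally if $af(a)\ge xf(x)$ (for instance whenever $a\le x\le 1$ or both lie past the mode), the claim is immediate. In the remaining regime, $a<1<x$ or $x$ is far beyond $a$. There we use that $x-a=\psi(y)\ge \psi(x)$ and that $c\ge x$. The right-hand side is then bounded by $xf(x)(1-F(x))=2x\phi(x)\cdot 2(1-\Phi(x))$. We plan to dominate it by $3F(x)f(x)f(c)+F(x)^3cf(c)$ when $c$ is close to $x$. When $c$ is large, $1-F(c)$ is tiny, and we use the Mills-ratio bound $1-\Phi(z)\le\phi(z)/z$. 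A finite compact region will likely remain. We would close it with monotonicity in $c$ and $a$ plus a verified numerical check on a grid with Lipschitz error control, which we expect to be the main technical burden.

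If the third-derivative inequality fails somewhere, the fallback is to bound $\partial_x D$ directly rather than through convexity. We would then show that $\partial_x D(x,y)$ changes sign at most once on $[\psi(y),y]$, which is all that quasi-concavity requires.
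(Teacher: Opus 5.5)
\textbf{Verdict: the proposal has a genuine gap.} Your normalization to $Z=|N|$, the trivial region, the endpoint sign $\partial_xD(y,y)\le0$, and the target inequality $af(a)+3F(x)f(x)f(c)+F(x)^3cf(c)\ge xf(x)\bigl(1-F(c)\bigr)$ all agree with the paper. The gap is in your plan for proving that inequality, which is the only new ingredient beyond \Cref{thm:global_optimality}. There are three problems.

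First, the ordering $x\le c$ is false. From $y\ge x$ you only get $c=y-\psi(x)\ge x-\psi(x)=\E[\min\{x,Z\}]$. In the nontrivial region, $x>\psi(y)\ge y-\E[Z]$ gives $y<x+\E[Z]$. Hence $c<x-\psi(x)+\E[Z]\le2\E[Z]=2\sqrt{2/\pi}$, so $c$ stays bounded while $x$ does not (for $x=y=3$, $c\approx0.80$). Replacing $1-F(c)$ by $1-F(x)$ therefore bounds the negative term in the wrong direction. It also means the ``$c$ large'' regime you plan to treat with the Mills ratio never occurs.

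Second, your easy case ``$a\le x\le1$'' is backwards. Since $zf(z)$ increases on $[0,1]$, there $af(a)\le xf(x)$.

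Third, whatever remains is deferred to an unspecified grid verification. Your fallback only restates the goal, so the key inequality is never established.

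The missing idea is to keep $1-F(c)$ paired with $f(c)$ through a hazard-rate bound that holds uniformly in $c$. With $m=\sqrt{2/\pi}$, we have $(1-F(c))/f(c)=\int_0^\infty e^{-cu-u^2/2}\,du\le 1/m$ for all $c\ge0$, so $f(c)\ge m\bigl(1-F(c)\bigr)$. Then $c$ essentially drops out, and the term $af(a)\ge0$ can simply be discarded. The argument splits into two cases.
\begin{itemize}
\item For $0<x\le2$: concavity of $F$ gives $F(x)/x\ge F(2)/2$, and $3mF(2)/2>1$. Hence $3F(x)f(x)f(c)\ge xf(x)\bigl(1-F(c)\bigr)$.
\item For $x\ge2$: we have $c\ge x-\psi(x)=x\bigl(1-F(x)\bigr)+m\bigl(1-e^{-x^2/2}\bigr)>m(1-e^{-2})$. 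Also $xf(x)\le2me^{-2}$, since $te^{-t^2/2}$ decreases for $t\ge1$. Hence $F(x)^3cf(c)\ge mF(2)^3c\bigl(1-F(c)\bigr)>xf(x)\bigl(1-F(c)\bigr)$.
\end{itemize}
This two-case bound is how the paper closes the argument, and it needs no numerical verification.
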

\proof{Proof.}
Translation of costs and values by $\mu$ and scaling by $\sigma$ reduce the problem to $Z=|N|$. Put $c_0=\sqrt{2/\pi}$, $f(t)=c_0e^{-t^2/2}$ for $t\ge0$, and $\overline F=1-F$. Then
\[
\psi(t)=tF(t)-c_0(1-e^{-t^2/2}),\qquad t\ge0.
\]
The trivial-region argument in \Cref{thm:global_optimality} handles $x\le\psi(y)$, and \Cref{lem:local_stability} gives $D_x(y,y)\le0$. It remains to prove $D_{xxx}(x,y)\ge0$ when $0<\psi(y)<x\le y$.

Write $a=x-\psi(y)>0$, $b=y-\psi(x)\ge x-\psi(x)>0$, and $h(t)=f(t)/\overline F(t)$. Since $f'(t)=-tf(t)$, differentiation gives
\begin{align*}
D_{xxx}(x,y)
={}&-xf(x)\overline F(b)+3F(x)f(x)f(b)\\
&+F(x)^3bf(b)+af(a).
\end{align*}
We use elementary bounds $3/4<c_0<1$ and $e^{-2}<1/7$. Moreover,
\[
\frac{\overline F(t)}{f(t)}
=\int_0^\infty e^{-tu-u^2/2}\,du\le\frac1{c_0},
\qquad t\ge0,
\]
so $h(t)\ge c_0$. The same integral is at most $1/t$ for $t>0$, giving $\overline F(2)\le f(2)/2<1/14$ and hence $F(2)>13/14$.

If $0<x\le2$, decreasing $f$ implies that $F(x)/x$ is non-increasing, and therefore
\[
3\frac{F(x)}x h(b)\ge3\frac{F(2)}2c_0>
3\frac{13}{28}\frac34=\frac{117}{112}>1.
\]
Thus $3F(x)f(x)f(b)\ge xf(x)\overline F(b)$, and all remaining terms are non-negative.

If $x\ge2$, then
\[
b\ge x-\psi(x)
=x\overline F(x)+c_0(1-e^{-x^2/2})>\frac9{14}.
\]
Also $F(x)>13/14$, and $xf(x)\le2c_0e^{-2}<2/7$, since $te^{-t^2/2}$ decreases for $t\ge1$. Consequently,
\[
F(x)^3b h(b)>
\left(\frac{13}{14}\right)^3\frac9{14}\frac34
>\frac27>xf(x).
\]
Now the third positive term dominates the negative term. This proves $D_{xxx}\ge0$ throughout the nontrivial region.

It follows that $D_x(\cdot,y)$ is convex and ends at a non-positive value. Its non-positive sublevel set is an interval containing $y$, so $D(\cdot,y)$ can first increase and then decrease, or decrease throughout. Its minimum is therefore attained at an endpoint. Since $D(\psi(y),y)\ge0$ and $D(y,y)=0$, every adjacent inversion can be removed without reducing welfare. The monotone Bellman updates propagate this comparison to the full sequence, proving the theorem.
\hfill\Halmos
\endproof

\subsection{Logistic Costs: Every Ranking Has the Same Welfare}
\label{sec:logistic_indifference}

For Logistic costs, ranking is irrelevant once wages are optimized. This follows directly from a closed-form continuation value.

\begin{proposition}\label{prop:logistic_indifference}
If $Z=\mu+sL$, where $L$ is standard Logistic and $s>0$, then for every ranking $\vu$,
\[
V(\vu)=s\log\left(1+\sum_{i=1}^n e^{(u_i-\mu)/s}\right).
\]
Consequently, $D(x,y)=0$ for every pair and all rankings are welfare-equivalent under their optimal wages.
\end{proposition}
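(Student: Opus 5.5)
The plan is to compute $\psi$ in closed form for the Logistic distribution and then show that the Bellman update of \Cref{lem:optimal_wages} becomes additive after an exponential change of variables.

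\textbf{Step 1: reduce to the standard Logistic.} By the location–scale invariance used in \Cref{thm:half_normal}, we may take $\mu=0$ and $s=1$. Indeed, if $Z=\mu+sL$, then
\[
\psi_Z(x)=\E[(x-\mu-sL)^+]=s\,\psi_L\big((x-\mu)/s\big),
\]
so continuation values scale by $s$ and shift the margins by $\mu$. Applying the recursion \eqref{eq:value_update} to the normalized values $(u_i-\mu)/s$ then recovers the general formula. Concretely, set $\tilde u_i=(u_i-\mu)/s$. By induction, $V(\vu_{t:n})=s\,\tilde V(\tilde{\vu}_{t:n})$, where $\tilde V$ is the value under the standard Logistic.

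\textbf{Step 2: compute $\psi$ for $F(z)=1/(1+e^{-z})$.} Using \eqref{eq:psi_integral},
\[
\psi(x)=\int_{-\infty}^x \frac{e^{s}}{1+e^{s}}\,ds=\log(1+e^{x}),
\]
which is the softplus function. The integral converges because $F(s)\sim e^{s}$ as $s\to-\infty$.

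\textbf{Step 3: solve the recursion.} Define $E_t\defeq e^{V(\vu_{t:n})}$, with $E_{n+1}=1$ because $V(\vu_{n+1:n})=0$. The recursion reads
\[
V(\vu_{t:n})=V(\vu_{t+1:n})+\log\big(1+e^{u_t-V(\vu_{t+1:n})}\big).
\]
Exponentiating gives
\[
E_t=E_{t+1}\big(1+e^{u_t}/E_{t+1}\big)=E_{t+1}+e^{u_t}.
\]
Hence $E_1=1+\sum_i e^{u_i}$, which yields the claimed formula. This expression is symmetric in the $u_i$, so $V(\vu)$ is the same for every $\vu\in\Pi(\vv)$.

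\textbf{Step 4: the swap difference.} The statement $D(x,y)=0$ is the two-worker case with zero continuation. The two orders give $\log(1+e^x+e^y)$ either way, so both sides of \eqref{eq:swap_difference} agree. One can also check this directly from
\[
\psi(y-\psi(x))+\psi(x)=\log\big(1+e^x+e^y\big).
\]

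There is no real obstacle here. The only care needed is the bookkeeping in the scaling reduction of Step 1, which is routine.
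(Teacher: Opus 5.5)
Your proposal is correct and follows essentially the same route as the paper: compute $\psi$ as a (scaled) softplus, then exponentiate the update of \Cref{lem:optimal_wages} to get the additive recursion $E_t=E_{t+1}+e^{u_t}$, which is symmetric in the values. The only difference is cosmetic. You first normalize to $\mu=0$, $s=1$, and you correctly note that continuation values scale by $s$ but are not shifted by $\mu$. The paper instead works directly with $\psi(t)=s\log\bigl(1+e^{(t-\mu)/s}\bigr)$ and exponentiates with $e^{(\cdot)/s}$.
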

\proof{Proof.}
Integrating the CDF gives $\psi(t)=s\log(1+e^{(t-\mu)/s})$. Thus the update in \Cref{lem:optimal_wages} satisfies
\[
\exp\left(\frac{q+\psi(u_i-q)}s\right)
=e^{q/s}+e^{(u_i-\mu)/s}.
\]
Starting with terminal value $q=0$ and working backward yields the displayed formula. It is unchanged by any permutation, proving the claim.
\hfill\Halmos
\endproof

Worker $i$ fulfills the job with probability $e^{(v_i-\mu)/s}/(1+\sum_j e^{(v_j-\mu)/s})$ under the optimal wages. The remaining probability is non-fulfillment. These probabilities have the multinomial-logit form and also do not depend on the order.
%

\section{Searching for Counterexamples to Greedy Optimality}
\label{sec:search_protocol}

This appendix describes the search behind \Cref{sec:simulations}. The task is to find two values $x<y$ for which the lower-value worker should be approached first. We explain the objective, the values and distributions searched, and the checks used to distinguish a counterexample from numerical error. The performance experiments in \Cref{sec:performance_protocol} ask the separate question of how much welfare greedy loses.

\subsection{Objective and Numerical Accuracy}

Recall that $\psi(t)=\E[(t-Z)^+]$ and
\[
D(x,y)=\psi\bigl(y-\psi(x)\bigr)+\psi(x)
       -\psi\bigl(x-\psi(y)\bigr)-\psi(y).
\]
This is greedy welfare minus the welfare of the reverse order. Directly minimizing $D$ is numerically difficult: it is zero when $x=y$, and the four terms can nearly cancel when values are close. We therefore divide by the value difference and minimize $G(x,y)=D(x,y)/(y-x)$. This preserves the sign for $x<y$ and gives the integral representation
\begin{equation}
G(x,y)=\int_0^1\!\left\{
F\bigl(s(u)-\psi(y)\bigr)
-F\bigl(s(u)\bigr)\left[1-F\bigl(y-\psi(s(u))\bigr)\right]
\right\}du,
\quad s(u)=x+u(y-x).
\label{eq:normalized_swap_gap}
\end{equation}
To obtain this identity, integrate $-D_x(s,y)$ from $x$ to $y$ and use $D(y,y)=0$. At equal values, we evaluate its continuous limit,
\[
G(y,y)=F\bigl(y-\psi(y)\bigr)[1+F(y)]-F(y).
\]
A negative limit identifies a promising neighborhood, but every reported counterexample is checked at distinct values.

We approximate the integral by Gauss--Legendre quadrature, a weighted sum of the integrand at selected points. The initial grid uses 32 points. Any value below $-10^{-10}$ is a \emph{screening flag} and is recomputed with both 64 and 96 points. Optimization between grid points always uses the 96-point rule. Finally, the most adverse grid candidates and every optimizer output are checked with 100-decimal-digit arithmetic, using both \eqref{eq:normalized_swap_gap} and the original expression for $D$. A negative initial screen alone is not reported as a counterexample.

\subsection{Distributions and Search Domains}

Location and scale need not be searched separately. For $Z=a+bZ_0$ with $b>0$,
\begin{equation}
\psi_{a,b}(a+bs)=b\psi_0(s),\qquad
D_{a,b}(a+bx,a+by)=bD_0(x,y).
\label{eq:search-location-scale}
\end{equation}
Thus an affine transformation of both costs and worker values preserves the sign of the comparison. We use the standardized laws in \Cref{tab:search-parameter-domains}; $\Phi$ denotes the standard Normal CDF. For nonnegative distributions, the displayed CDF applies for $z>0$ and is zero for $z\le0$.

\begin{table}[H]
\centering
\small
\begin{tabular}{@{}lll@{}}
\hline
\textbf{Family} & \textbf{Standardized CDF} & \textbf{Shape range} \\
\hline
Normal & $\Phi(z)$ & None \\
Log-normal & $\Phi(\log z/\sigma)$ & $\sigma\in[0.1,3]$ \\
Log-logistic & $z^k/(1+z^k)$ & $k\in[1.01,20]$ \\
Weibull & $1-e^{-z^k}$ & $k\in[0.25,20]$ \\
Gumbel for maxima & $e^{-e^{-z}}$ & None \\
Gumbel for minima & $1-e^{-e^z}$ & None \\
\hline
\end{tabular}
\caption{Distribution definitions and finite shape ranges. Each shape range uses 31 geometrically spaced values, together with optimization over the continuous range.}
\label{tab:search-parameter-domains}
\end{table}

We choose values by their position in the cost distribution. A probability $q$ gives the value $F^{-1}(q)$, so $q=1/2$ selects the median. The grid combines 193 equally spaced probabilities from $0.1$ to $0.9$ with 96 geometrically spaced probabilities from $10^{-12}$ to $0.1$ and their upper-tail reflections $1-q$. Geometric spacing places points across successive orders of magnitude in the tails. After duplicate endpoints are removed, there are 383 probabilities. For distributions with support $\mathbb R$, we check all $383(384)/2=73{,}536$ ordered pairs, including equal values.

For nonnegative costs, the argument in \eqref{eq:trivial_case} gives $D(x,y)\ge0$ whenever $x\le\psi(y)$. The remaining interval is parameterized by
\[
x=\psi(y)+t[y-\psi(y)],\qquad 0\le t\le1.
\]
For each of the 383 choices of $y$, we use 181 equally spaced $t$-values on $[0,0.99]$ and 28 additional values whose distances $1-t$ are geometrically spaced from $10^{-6}$ to $10^{-2}$. Removing one duplicate leaves 208 values; we also check $t=1$. This gives $80{,}047$ checks per shape, or $2{,}481{,}457$ over 31 shapes. The probability window bounds $y$; the $t$ parameterization covers the entire remaining interval for $x$.

\subsection{Searching Between Grid Points}

We supplement the grid with differential evolution \citep{storn1997differential}. This method maintains a collection of candidate points, proposes new points by combining and perturbing existing candidates, and keeps changes that lower the objective. Multiple runs reduce dependence on any one starting collection. We use seeds $0,\ldots,5$ when no shape parameter remains and $0,\ldots,9$ when shape is searched jointly with worker values. Each run uses population multiplier 18, at most 280 generations, relative tolerance $10^{-9}$, and a final local polishing step. Positive shape parameters are searched on a logarithmic scale. All probability coordinates lie in $[10^{-12},1-10^{-12}]$.

The deterministic grid and the optimizer cover the same finite domains. Neither guarantees that every adverse region is found. In particular, the search leaves unexamined more extreme tails, shapes outside the displayed ranges, and points missed within the continuous domains.

\subsection{Results and Accuracy Checks}

\Cref{tab:quadrature-screening} reports initial screening flags and how many survive more accurate quadrature. For Gumbel maxima, all 7,887 initial flags disappear at both higher orders; selected candidates are also positive at 100-digit precision. For Weibull and Gumbel minima, many flags persist and selected distinct pairs have negative $D$ at high precision.

\begin{table}[H]
\centering
\small
\setlength{\tabcolsep}{4pt}
\begin{tabular}{@{}lrrrrr@{}}
\hline
\textbf{Family} & \textbf{Grid checks} & \textbf{32 points} & \textbf{64 points} & \textbf{96 points} & \textbf{Equal values} \\
\hline
Normal & 73,536 & 0 & 0 & 0 & 0 \\
Log-normal & 2,481,457 & 0 & 0 & 0 & 0 \\
Log-logistic & 2,481,457 & 0 & 0 & 0 & 0 \\
Gumbel maxima & 73,536 & 7,887 & 0 & 0 & 0 \\
Weibull & 2,481,457 & 258,168 & 253,577 & 253,577 & 2,407 \\
Gumbel minima & 73,536 & 58,840 & 56,994 & 56,994 & 306 \\
\hline
\end{tabular}
\caption{Numbers of screening flags below $-10^{-10}$. The 64- and 96-point columns recheck the points flagged initially. The last column uses the closed-form limit at $x=y$, separately from quadrature. Grid counts include that diagonal check.}
\label{tab:quadrature-screening}
\end{table}

For example, the following pairs give negative normalized differences under the original four-term expression at 100-digit precision:
\[
\begin{aligned}
\text{Weibull }(k=19.999999999757964):\quad
&(x,y)=(1.1790017888391056,1.1790027937567658),\\
&G=-0.1148099866340864867\ldots;\\
\text{Gumbel minima}:\quad
&(x,y)=(2.432508223496572,2.440584312244446),\\
&G=-0.1407433343056030667\ldots.
\end{aligned}
\]
These detections show that the same procedure used for the other families can find counterexamples. They establish suboptimality numerically at the selected points; the analytical argument in \Cref{sec:failure_proofs} establishes it independently.


\section{Performance When Greedy Is Suboptimal}
\label{sec:performance_protocol}

This appendix supplies the proofs and experiments used in the second part of \Cref{sec:simulations}. We first prove the mean--median condition and its implications for the three cost families. We then explain how to evaluate welfare for a given set of worker values, reduce the Beta search to a bounded region, and describe the limiting experiments for concentrated costs. The final subsection collects the search settings and numerical results.

\subsection{Why Counterexamples Exist}
\label{sec:failure_proofs}

\begin{lemma}[Mean--median condition]
\label{lem:mean-median-obstruction}
Let $F$ be continuous with finite first moment and mean $\mu$. If $F(\mu)<1/2$, then some two-worker instance makes greedy ranking strictly suboptimal under welfare-optimal wages.
\end{lemma}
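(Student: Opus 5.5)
The plan is to take two workers with large, nearly equal net values $x<y$ and show that the swap difference $D(x,y)$ from \eqref{eq:swap_difference} is strictly negative. Since $D(x,y)$ is the greedy welfare $V((y,x))$ minus the reverse-order welfare $V((x,y))$, each under its own optimal wages by \Cref{lem:optimal_wages}, a negative value means that approaching the lower-value worker first strictly improves welfare. The work is to compute the derivative of $D$ in $x$ at the diagonal and take its limit as $y\to\infty$.

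\textbf{Step 1: derivative of $D$.} Because $F$ is continuous, $\psi(t)=\int_{-\infty}^t F(s)\,ds$ is continuously differentiable with $\psi'=F$ everywhere. Hence $D(\cdot,y)$ is $C^1$, and as in \eqref{eq:first_derivative},
\[
-\frac{\partial D}{\partial x}(s,y)=F\bigl(s-\psi(y)\bigr)-F(s)\bigl[1-F\bigl(y-\psi(s)\bigr)\bigr] \eqqcolon g(s,y).
\]
The map $g(\cdot,y)$ is continuous. Since $D(y,y)=0$, the fundamental theorem of calculus gives
\[
D(x,y)=\int_x^y g(s,y)\,ds .
\]
It therefore suffices to find a $y$ with $g(y,y)<0$. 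Continuity then gives $\varepsilon>0$ such that $g(s,y)<0$ for $s\in[y-\varepsilon,y]$, and taking $x=y-\varepsilon$ yields $D(x,y)<0$.

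\textbf{Step 2: the diagonal limit.} On the diagonal,
\[
g(y,y)=F\bigl(y-\psi(y)\bigr)\bigl[1+F(y)\bigr]-F(y).
\]
Using $\max\{0,y-Z\}=(y-Z)+(Z-y)^+$, we get
\[
\psi(y)=y-\mu+\E[(Z-y)^+],
\]
so that
\[
y-\psi(y)=\mu-\E[(Z-y)^+].
\]
Because $Z$ has a finite first moment, dominated convergence gives $\E[(Z-y)^+]\to0$ as $y\to\infty$, and hence $y-\psi(y)\to\mu$. By continuity of $F$ at $\mu$, $F(y-\psi(y))\to F(\mu)$, while $F(y)\to1$. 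Therefore
\[
g(y,y)\to 2F(\mu)-1<0 .
\]
Choosing $y$ large enough makes $g(y,y)<0$, which completes the argument via Step 1.

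\textbf{Main obstacle.} The delicate point is the regularity used in Step 1. We need $\psi'=F$ at every point, not just almost everywhere, and $g$ must be continuous. Both follow from the continuity of $F$, which is why the lemma assumes $F$ continuous; a jump in $F$ could break the local sign argument near the diagonal.

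The resulting instance is the one described in the main text. With two high values, the greedy first wage is $y-\psi(x)\approx\mu$ and is accepted with probability below one half. The reverse order instead keeps the better match as a fallback.
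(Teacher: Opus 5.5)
Your proof is correct, but it takes the two limits in the opposite order from the paper.

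The paper fixes a gap $d>0$ and lets $y\to\infty$, which gives the limiting profile $\lim_{y\to\infty}D(y-d,y)=\kappa(d)\defeq\psi(\mu+d)-\psi(\mu-d)-d$. It then uses $\kappa(0)=0$ and $\kappa'(0)=2F(\mu)-1<0$ to choose a small $d$, and only afterwards a large $y$. You instead stay at a fixed finite $y$, write $D(x,y)=\int_x^y g(s,y)\,ds$, reduce the sign question to the diagonal value $g(y,y)$, and only then send $y\to\infty$.

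Your ordering exposes the mechanism. The quantity $g(y,y)=F(y-\psi(y))[1+F(y)]-F(y)$ is exactly the slack in the local-stability inequality \eqref{eq:local_stability_under_optwages}. So your Step~1 shows that, for continuous $F$, a violation of \eqref{eq:local_stability_under_optwages} at any single point $y$ already yields a strict two-worker counterexample with values $(y-\varepsilon,y)$. The paper uses that inequality only as a sufficient condition for local stability; your argument shows it is in fact necessary for greedy to be optimal on all two-worker instances. The mean--median condition is then one convenient way to make it fail at large $y$.

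The paper's ordering instead produces an explicit limiting loss $\kappa(d)$ at a fixed, non-infinitesimal gap, which quantifies how much greedy loses asymptotically. It is also the same function $\kappa$ that handles the finite-support boundary case in the proof of \Cref{thm:global_optimality}.
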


\proof{Proof.}
Fix $d>0$ and let the worker values be $x=y-d$ and $y$. As $y\to\infty$, integrability gives $\psi(y)=y-\mu+o(1)$ and $\psi(y-d)=y-d-\mu+o(1)$. Substituting in \eqref{eq:swap_difference} and using the continuity and $1$-Lipschitz property of $\psi$ gives
\begin{equation}
\lim_{y\to\infty}D(y-d,y)
=\psi(\mu+d)-\psi(\mu-d)-d
\defeq\kappa(d).
\label{eq:mean-median-kappa}
\end{equation}
Because $\psi'=F$, we have $\kappa(0)=0$ and $\kappa'(0)=2F(\mu)-1<0$. Thus $\kappa(d)<0$ for all sufficiently small positive $d$, and $D(y-d,y)<0$ for all sufficiently large $y$. The reverse order then yields higher welfare.
\hfill\Halmos
\endproof

The proof explains the construction in the main text: raise both worker values while keeping their difference small. Under the greedy order, the first wage $y-\psi(y-d)$ approaches $\mu+d$. If $F(\mu)<1/2$, that early offer is accepted less often than the offer to the fallback worker, conditional on reaching the pair, when $d$ is small. Reversing the ranking and reoptimizing wages can improve welfare.

\begin{proposition}[Beta$(\alpha,1)$]
\label{prop:beta-alpha-one-phase}
For $Z\sim\operatorname{Beta}(\alpha,1)$, greedy ranking is globally welfare-optimal for every number of workers and every value vector if and only if $0<\alpha\le1$. Every $\alpha>1$ admits a strict two-worker counterexample.
\end{proposition}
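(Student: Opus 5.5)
The plan has two directions, and each is a direct application of an earlier result.

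\emph{Sufficiency ($0<\alpha\le1$).} Write $Z\sim\operatorname{Beta}(\alpha,1)$, so $F(z)=z^\alpha$ on $[0,1]$. The support has lower endpoint $0$, there is no atom ($p_0=0$), and the upper endpoint is $b=1<\infty$ with no atom there. On $(0,1)$ the density is $f(z)=\alpha z^{\alpha-1}$. This density is continuously differentiable. Its derivative is $f'(z)=\alpha(\alpha-1)z^{\alpha-2}\le0$, so it is non-increasing. Its second derivative is $f''(z)=\alpha(\alpha-1)(\alpha-2)z^{\alpha-3}\ge0$, because for $\alpha\in(0,1]$ the factors $(\alpha-1)\le0$ and $(\alpha-2)<0$ give a non-negative product; so it is convex. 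At $z=1$ the density jumps from $\alpha$ to $0$, and this is exactly the single terminal drop that \Cref{thm:global_optimality} permits. The only subtlety is that $f$ is unbounded near $0$ when $\alpha<1$. The theorem only requires smoothness on the open support, and its proof uses $f'$ only at arguments that are strictly positive in the nontrivial region $\psi(y)<x\le y$, so this causes no problem. For $\alpha=1$ (the uniform case) we have $f'\equiv0$, which is trivially fine. So \Cref{thm:global_optimality} gives $\gre(\vv)=\opt(\vv)$ for every $n$ and every $\vv$.

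\emph{Necessity ($\alpha>1$).} Here we invoke \Cref{lem:mean-median-obstruction}. $F$ is continuous with mean $\mu=\alpha/(\alpha+1)$, so we need to show
\[
F(\mu)=\left(\frac{\alpha}{\alpha+1}\right)^{\alpha}<\frac12 .
\]
Let $g(\alpha)=\alpha\log\frac{\alpha}{\alpha+1}$. We have $g(1)=-\log2$. Its derivative is
\[
g'(\alpha)=\log\frac{\alpha}{\alpha+1}+\frac{1}{\alpha+1}=-\log\Bigl(1+\frac1\alpha\Bigr)+\frac{1}{\alpha+1}.
\]
The elementary inequality $\log(1+t)>t/(1+t)$ with $t=1/\alpha$ gives $\log(1+1/\alpha)>1/(\alpha+1)$, so $g'(\alpha)<0$. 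Hence $g(\alpha)<-\log 2$ for $\alpha>1$, which is $F(\mu)<1/2$. The lemma then produces a two-worker instance, with values $y-d$ and $y$, on which the reverse order strictly beats greedy.

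One step needs care. The proof of \Cref{lem:mean-median-obstruction} takes $y\to\infty$, and for bounded support the approximation $\psi(y)=y-\mu$ is already exact once $y\ge1$. So the limit formula $\kappa(d)=\psi(\mu+d)-\psi(\mu-d)-d$ becomes an exact identity for all large finite $y$. It does not rely on any genuine asymptotics.

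The main obstacle is the careful verification that the $\alpha<1$ singularity at $0$ and the terminal jump at $b=1$ fit the hypotheses of \Cref{thm:global_optimality}. The necessity direction is routine once the monotonicity of $g$ is checked.
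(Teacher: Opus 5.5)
Your proposal is correct and matches the paper's proof. For $0<\alpha\le1$ you check that $f(z)=\alpha z^{\alpha-1}$ is non-increasing and convex on $(0,1)$, with the permitted terminal drop at $1$, so \Cref{thm:global_optimality} applies; for $\alpha>1$ you show $(\alpha/(\alpha+1))^\alpha<1/2$ from the sign of the logarithmic derivative and invoke \Cref{lem:mean-median-obstruction}. Your added details are valid refinements of steps the paper leaves implicit: the inequality $\log(1+t)>t/(1+t)$, and the observation that $\psi(y)=y-\mu$ holds exactly for $y\ge1$, so the limit in the lemma is actually attained.
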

\proof{Proof.}
For $0<\alpha\le1$, the density $f(z)=\alpha z^{\alpha-1}$ is non-increasing and convex on $(0,1)$, so \Cref{thm:global_optimality} applies, including its permitted endpoint behavior. For $\alpha>1$, the mean is $\mu_\alpha=\alpha/(\alpha+1)$ and
\begin{equation}
F_\alpha(\mu_\alpha)=\left(\frac{\alpha}{\alpha+1}\right)^\alpha<\frac12.
\label{eq:beta-alpha-one-mean-cdf}
\end{equation}
Indeed, this expression equals $1/2$ at $\alpha=1$ and its logarithmic derivative is $-\log(1+1/\alpha)+1/(\alpha+1)<0$. Apply \Cref{lem:mean-median-obstruction}.
\hfill\Halmos
\endproof

For unit-scale Weibull costs, the mean and median are $\Gamma(1+1/k)$ and $(\log2)^{1/k}$. Hence the condition is $[\Gamma(1+1/k)]^k<\log2$. The function $r\mapsto\log\Gamma(1+r)/r$ is strictly increasing because $\log\Gamma(1+r)$ is strictly convex and vanishes at zero. Therefore $[\Gamma(1+1/k)]^k$ decreases strictly in $k$, from one at $k=1$ to $e^{-\gamma_{\rm E}}<\log2$ as $k\to\infty$. The mean and median cross once, at $k_0=3.439540619\ldots$. The lemma proves suboptimality for every $k>k_0$.

For standard Gumbel minima, $F(z)=1-e^{-e^z}$ and $\E[Z]=-\gamma_{\rm E}<\log(\log2)=\operatorname{median}(Z)$, where $\gamma_{\rm E}$ is Euler's constant. The lemma again applies. Location and positive scale do not affect either conclusion.

\subsection{Evaluating Greedy and the Best Ranking}

For a fixed order, begin with continuation value zero after the last worker. Working backward, a worker with net value $v_i$ changes the continuation value from $q$ to $q+\psi(v_i-q)$. Taking the offering order to be decreasing in worker values and evaluating it backward gives $\gre(\vv)$.

To compute the online optimum, let $Q(S)$ be the best welfare obtainable from the subset $S$ of available workers. Trying each worker as the first offer gives
\begin{equation}
Q(S)=\max_{i\in S}\left\{
Q(S\setminus\{i\})+\psi\bigl(v_i-Q(S\setminus\{i\})\bigr)
\right\},\qquad Q(\varnothing)=0.
\label{eq:performance-subset-dp}
\end{equation}
The update is nondecreasing in the downstream value because $\psi$ is $1$-Lipschitz. It is therefore optimal to use the best order for the remaining subset. Computing $Q$ from smaller to larger subsets accounts for every possible ranking, without separately listing all $n!$ orders. The result for the full worker pool is $\opt(\vv)$.

For a fixed cost distribution, the true worst-case ratio is
\begin{equation}
R_n(F)=\inf_{\vv:\,\opt_F(\vv)>0}\frac{\gre_F(\vv)}{\opt_F(\vv)}.
\label{eq:family-worst-ratio}
\end{equation}
The experiments search for vectors with low ratios. The ranking optimization in \eqref{eq:performance-subset-dp} is exhaustive, while the search over worker values may miss better candidates. We recompute every selected minimum with the same recursion at 80-digit precision. Thus each reported ratio is an upper bound on the unknown $R_n(F)$, subject to numerical evaluation; \Cref{thm:worst_case} supplies the rigorous lower bound $n/(2n-1)$.
\subsection{Beta Costs: A Compact Search and a Concentrated-Cost Limit}
\label{sec:beta-alpha-one}

We use Beta$(\alpha,1)$ on $[0,1]$. Multiplying costs and values by the same positive constant rescales welfare and leaves every ratio unchanged. The CDF and incremental value function are
\begin{equation}
F_\alpha(z)=z^\alpha\quad(0\le z\le1),\qquad
\label{eq:beta-alpha-one-cdf}
\end{equation}
\begin{equation}
\psi_\alpha(t)=
\begin{cases}
0, & t\le0,\\[2pt]
t^{\alpha+1}/(\alpha+1), & 0<t<1,\\[2pt]
t-\alpha/(\alpha+1), & t\ge1.
\end{cases}
\label{eq:beta-alpha-one-psi}
\end{equation}
For two workers with $x\le y$, denote the welfare of the greedy order and the reverse order by
\begin{align}
G_\alpha(x,y)&=\psi_\alpha(x)+\psi_\alpha(y-\psi_\alpha(x)),
\label{eq:beta-alpha-one-greedy-two}\\
A_\alpha(x,y)&=\psi_\alpha(y)+\psi_\alpha(x-\psi_\alpha(y)).
\label{eq:beta-alpha-one-alt-two}
\end{align}
The online optimum is $O_\alpha=\max\{G_\alpha,A_\alpha\}$, and we write
\begin{equation}
R_2(\alpha)=\inf_{x\le y:\,O_\alpha(x,y)>0}G_\alpha(x,y)/O_\alpha(x,y).
\label{eq:beta-alpha-one-ratio}
\end{equation}
The following reduction makes it sufficient to search a bounded region, even though worker values themselves are unrestricted.

\begin{lemma}[Compact two-worker search]
\label{lem:beta-alpha-one-compact}
Let $\mu_\alpha=\alpha/(\alpha+1)$. For $\alpha>1$, the infimum defining $R_2(\alpha)$ is unchanged if the search is restricted to
\begin{equation}
0\le y\le1+\mu_\alpha,\qquad \psi_\alpha(y)<x\le y.
\label{eq:beta-alpha-one-compact-domain}
\end{equation}
Every counterexample outside this region has a counterpart inside it with a weakly lower welfare ratio.
\end{lemma}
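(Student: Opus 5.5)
The plan is to sort the pairs $x\le y$ outside the region \eqref{eq:beta-alpha-one-compact-domain} into three cases. In each case we either show the ratio $G_\alpha/O_\alpha$ equals $1$, or move the pair into the region without increasing the ratio. Since $\alpha>1$, \Cref{prop:beta-alpha-one-phase} provides a strict two-worker counterexample, so $R_2(\alpha)<1$. Pairs with ratio exactly $1$ therefore never affect the infimum. Recall that $D=G_\alpha-A_\alpha$ by \eqref{eq:swap_difference}, \eqref{eq:beta-alpha-one-greedy-two} and \eqref{eq:beta-alpha-one-alt-two}.

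\emph{Case $y\le 0$.} Then $x\le y\le 0$ and $\psi_\alpha(x)=\psi_\alpha(y)=0$. Hence $O_\alpha=0$ and the pair is excluded from \eqref{eq:beta-alpha-one-ratio}.

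\emph{Case $y>0$ and $x\le\psi_\alpha(y)$.} The trivial-region argument \eqref{eq:trivial_case} from the proof of \Cref{thm:global_optimality} gives $D(x,y)\ge 0$. It uses only that $\psi$ is non-decreasing, $1$-Lipschitz, and vanishes on $(-\infty,0]$, which all hold for \eqref{eq:beta-alpha-one-psi}. So $G_\alpha=O_\alpha$ and the ratio is $1$.

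\emph{Case $y>1+\mu_\alpha$ and $\psi_\alpha(y)<x\le y$.} This is the substantive step, and the approach is a translation argument. Since $y\ge 1$, \eqref{eq:beta-alpha-one-psi} gives $\psi_\alpha(y)=y-\mu_\alpha$. The constraint then says $c\defeq y-x\in[0,\mu_\alpha)$, and so $x>y-\mu_\alpha>1$ and $\psi_\alpha(x)=x-\mu_\alpha$. Substituting into \eqref{eq:beta-alpha-one-greedy-two} and \eqref{eq:beta-alpha-one-alt-two} yields
\[
G_\alpha=L+g(c),\qquad A_\alpha=L+a(c),\qquad L\defeq y-\mu_\alpha,
\]
with $g(c)=\psi_\alpha(\mu_\alpha+c)-c$ and $a(c)=\psi_\alpha(\mu_\alpha-c)$. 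Both $g$ and $a$ depend only on $c$. If $a(c)\le g(c)$, the ratio is $1$. Otherwise $g(c)<a(c)$, and I will check that $L\mapsto (L+g)/(L+a)$ is increasing for $L+a>0$: its derivative has the sign of $a-g>0$. So I replace $(x,y)$ by $(1+\mu_\alpha-c,\,1+\mu_\alpha)$. This keeps $c$ fixed and lowers $L$ to $1$. The new pair satisfies $y=1+\mu_\alpha$ and $x>1=\psi_\alpha(y)$, so it lies in \eqref{eq:beta-alpha-one-compact-domain}. The same formulas still apply, because both new values are at least $1$. Finally, $O_\alpha\ge L>0$ at both points, so the ratio is well defined and weakly lower. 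This covers every point outside the region.

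The main obstacle is the bookkeeping in this last case. One must confirm that $\psi_\alpha$ is in its linear regime at every argument: $x$ and $y$ are at least $1$ both before and after the shift. One must also use the boundary value $y=1+\mu_\alpha$ exactly, since this is what guarantees $x>1$ whenever $c<\mu_\alpha$. The quantities $\psi_\alpha(\mu_\alpha\pm c)$ need not lie in the linear regime, but they are fixed by $c$ and so cause no difficulty.
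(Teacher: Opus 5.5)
Your proof is correct and follows the paper's argument. The trivial region $x\le\psi_\alpha(y)$ gives $G_\alpha\ge A_\alpha$, and for $y>1+\mu_\alpha$ both welfares are a common shift plus a function of $y-x$ alone, so translating the pair downward weakly lowers the ratio whenever reversal is strictly better. The only difference is cosmetic: the paper translates to $(1,1+d)$, while you translate to $(1+\mu_\alpha-c,\,1+\mu_\alpha)$, and both targets lie in the compact region with the same branches of $\psi_\alpha$ in force.
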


\proof{Proof.}
If $x\le\psi_\alpha(y)$, then the final term in
\eqref{eq:beta-alpha-one-alt-two} is zero.  Since $\psi_\alpha$ is
$1$-Lipschitz,
\[
\psi_\alpha(y)-\psi_\alpha\bigl(y-\psi_\alpha(x)\bigr)
\le\psi_\alpha(x),
\]
so $G_\alpha(x,y)\ge A_\alpha(x,y)$.  Hence strict failure requires
$x>\psi_\alpha(y)$, and therefore $x,y\ge0$.

Suppose $y>1+\mu_\alpha$.  Then
$x>\psi_\alpha(y)=y-\mu_\alpha>1$.  Write $d=y-x$, so
$0\le d<\mu_\alpha$.  Using the upper branch of
\eqref{eq:beta-alpha-one-psi},
\begin{align*}
G_\alpha(x,y)
&=x-\mu_\alpha+\psi_\alpha(\mu_\alpha+d),\\
A_\alpha(x,y)
&=x+d-\mu_\alpha+\psi_\alpha(\mu_\alpha-d).
\end{align*}
Thus $A_\alpha-G_\alpha$ depends only on $d$.  When this advantage is
positive, $G_\alpha/A_\alpha$ is increasing in $x$.  Translating the
pair down to $(1,1+d)$ preserves the advantage and weakly lowers the
ratio, and $1+d\le1+\mu_\alpha$.
\hfill\Halmos
\endproof

\paragraph{An explicit loss as costs concentrate.}
The universal guarantee bounds the ratio below. The following construction shows a two-worker loss of about $8.4\%$ in the limit as $\alpha\to\infty$.
\begin{proposition}[An explicit limiting counterexample]
\label{prop:beta-alpha-one-gap}
For every $\alpha>1$,
\begin{equation}
\frac23\le R_2(\alpha)<1.
\label{eq:beta-alpha-one-two-thirds}
\end{equation}
Furthermore, set
\begin{equation}
x_\alpha=1-\frac{9}{20\alpha},
\qquad
y_\alpha=1+\frac{37}{100\alpha}.
\label{eq:beta-alpha-one-explicit-pair}
\end{equation}
For this explicit two-worker instance,
\begin{equation}
\lim_{\alpha\to\infty}
\frac{G_\alpha(x_\alpha,y_\alpha)}
{O_\alpha(x_\alpha,y_\alpha)}
=\rho
\defeq
\frac{e^{-9/20}+\exp\!\left(37/100-e^{-9/20}\right)}
{137/100+e^{-91/50}}
\approx0.915663731.
\label{eq:beta-alpha-one-explicit-limit}
\end{equation}
Consequently, $\limsup_{\alpha\to\infty}R_2(\alpha)\le\rho$:
greedy can lose at least $1-\rho\approx8.4336\%$ with only two
workers.
\end{proposition}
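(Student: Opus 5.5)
The statement has three parts: the lower bound $R_2(\alpha)\ge 2/3$, the strict upper bound $R_2(\alpha)<1$, and the explicit limit $\rho$. The plan is to handle them separately.

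\emph{Lower bound.} We invoke \Cref{thm:worst_case} with $n=2$. It gives $G_\alpha\ge \tfrac{2}{3}\off\ge\tfrac23 O_\alpha$ for every pair $x\le y$, since the prophet value dominates any online policy. Taking the infimum yields $R_2(\alpha)\ge 2/3$.

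\emph{Strict upper bound.} For $\alpha>1$, \eqref{eq:beta-alpha-one-mean-cdf} gives $F_\alpha(\mu_\alpha)<1/2$. \Cref{lem:mean-median-obstruction} then supplies a pair with $D<0$, i.e.\ $A_\alpha>G_\alpha$. It remains to check that $O_\alpha>0$ for that pair. The construction sends both values to infinity, so $\psi_\alpha(y)>0$. Hence the ratio $G_\alpha/O_\alpha<1$ at that pair, and the infimum is strictly below one. Some care is needed here: an infimum being below one only needs one strict instance, so this suffices.

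\emph{The explicit limit.} This is the computational core; I would take the following steps in order.

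First, compute $\psi_\alpha(x_\alpha)$. Since $x_\alpha<1$, we have $\psi_\alpha(x_\alpha)=x_\alpha^{\alpha+1}/(\alpha+1)$. Using $x_\alpha^{\alpha}\to e^{-9/20}$, this gives $\alpha\,\psi_\alpha(x_\alpha)\to e^{-9/20}$.

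Second, treat the greedy second stage. The argument is $y_\alpha-\psi_\alpha(x_\alpha)=1+\bigl(\tfrac{37}{100}-e^{-9/20}+o(1)\bigr)/\alpha$. The coefficient $0.37-0.6376<0$, so the argument lies below $1$ for large $\alpha$ and the power branch applies. Its $\alpha$-th power tends to $\exp(37/100-e^{-9/20})$. Therefore
\[
\alpha\,G_\alpha\to e^{-9/20}+\exp\!\left(\tfrac{37}{100}-e^{-9/20}\right).
\]

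Third, treat the reverse order. Here $y_\alpha>1$, so $\psi_\alpha(y_\alpha)=y_\alpha-\mu_\alpha=\tfrac{1}{\alpha+1}+\tfrac{37}{100\alpha}$, and $\alpha\psi_\alpha(y_\alpha)\to 137/100$. Then
\[
x_\alpha-\psi_\alpha(y_\alpha)=1-\bigl(\tfrac{9}{20}+\tfrac{137}{100}+o(1)\bigr)/\alpha=1-\bigl(\tfrac{91}{50}+o(1)\bigr)/\alpha,
\]
so $\alpha\,\psi_\alpha(x_\alpha-\psi_\alpha(y_\alpha))\to e^{-91/50}$. This gives $\alpha A_\alpha\to 137/100+e^{-91/50}$.

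Fourth, compare the two limits: $137/100+e^{-91/50}\approx1.532$ exceeds the greedy limit $\approx1.403$. Hence $O_\alpha=A_\alpha$ eventually, and the ratio converges to $\rho$. Since $R_2(\alpha)\le G_\alpha/O_\alpha$ at this pair, the $\limsup$ bound follows.

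The main obstacle is making the $o(1/\alpha)$ corrections rigorous when exponentiating. Terms like $(1-c/\alpha+o(1/\alpha))^{\alpha+1}$ require the error inside to be $o(1/\alpha)$ uniformly, which follows from $\log(1-u)=-u+O(u^2)$. One must also justify that $\psi_\alpha(x_\alpha)=O(1/\alpha)$, so that it feeds into the exponent correctly. I would handle all four limits with a single auxiliary lemma: if $\alpha(1-t_\alpha)\to c$, then $t_\alpha^{\alpha+1}\to e^{-c}$.
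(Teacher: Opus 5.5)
Your proposal is correct and follows the paper's proof. Both derive the lower bound from \Cref{thm:worst_case} with $n=2$ together with $O_\alpha\le\off$, and both get the strict upper bound from the mean--median obstruction; the paper cites it via \Cref{prop:beta-alpha-one-phase}, and your check that $O_\alpha>0$ is harmless. The limit is likewise obtained in both by evaluating the four $\psi_\alpha$ terms on the scale $1/\alpha$.

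The only difference is packaging. The paper states your auxiliary lemma as the fixed-$s$ limit $\alpha\psi_\alpha(1+s/\alpha)\to h(s)$, with $h(s)=e^s$ for $s<0$ and $h(s)=1+s$ for $s\ge0$. That form needs the $1$-Lipschitz property to handle the shifted inner argument. Your sequence form ($\alpha(1-t_\alpha)\to c$ implies $t_\alpha^{\alpha+1}\to e^{-c}$) covers that inner argument directly.
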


\proof{Proof.}
For $n=2$, \Cref{sec:worst_case_guarantees} gives
$G_\alpha/\off\ge2/3$.  Since $O_\alpha\le\off$,
$G_\alpha/O_\alpha\ge2/3$.  The strict upper inequality follows from
\Cref{prop:beta-alpha-one-phase}.

For the explicit construction, define
\begin{equation}
h(s)=
\begin{cases}
e^s, & s<0,\\
1+s, & s\ge0.
\end{cases}
\label{eq:beta-alpha-one-boundary-h}
\end{equation}
Direct substitution in \eqref{eq:beta-alpha-one-psi} gives, for every
fixed $s$,
\begin{equation}
\alpha\psi_\alpha(1+s/\alpha)\longrightarrow h(s).
\label{eq:beta-alpha-one-boundary-limit}
\end{equation}
Let $u=-9/20$ and $v=37/100$.  Applying the limit first to the
terminal worker and then to the upstream worker gives
\begin{align*}
\alpha G_\alpha(x_\alpha,y_\alpha)
&\longrightarrow h(u)+h\bigl(v-h(u)\bigr),\\
\alpha A_\alpha(x_\alpha,y_\alpha)
&\longrightarrow h(v)+h\bigl(u-h(v)\bigr).
\end{align*}
The two limits are
\[
e^u+e^{v-e^u}\approx1.402820415,
\qquad
1+v+e^{u-1-v}\approx1.532025751,
\]
whose ratio is \eqref{eq:beta-alpha-one-explicit-limit}.
\hfill\Halmos
\endproof

\paragraph{Why zoom in near the upper endpoint?}
As $\alpha$ grows, costs concentrate near one. To keep a meaningful comparison, set worker values to $v_i^{(\alpha)}=1+s_i/\alpha$ and measure welfare in units of $1/\alpha$. This examines a shrinking neighborhood of the upper endpoint, often called a boundary layer. The limit $h$ in \eqref{eq:beta-alpha-one-boundary-h} retains the relevant uncertainty: $\alpha(1-Z)$ converges to an Exponential variable with mean one, since for fixed $t\ge0$,
\[
\Pr\{\alpha(1-Z)>t\}=(1-t/\alpha)^\alpha\longrightarrow e^{-t}.
\]
Thus the rescaled cost is the negative of an Exponential variable, and $h(s)$ is its incremental welfare function.

For a fixed order $\sigma$, define $q_{n+1}=0$ and
\begin{equation}
q_j=q_{j+1}+h(s_{\sigma(j)}-q_{j+1}),\qquad j=n,\ldots,1.
\label{eq:beta-alpha-one-limit-recursion}
\end{equation}
The functions $s\mapsto\alpha\psi_\alpha(1+s/\alpha)$ are $1$-Lipschitz. Their pointwise convergence in \eqref{eq:beta-alpha-one-boundary-limit} therefore permits backward induction with converging continuation values, giving
\begin{equation}
\alpha V_\alpha(v_{\sigma(1)}^{(\alpha)},\ldots,v_{\sigma(n)}^{(\alpha)})\longrightarrow q_1.
\label{eq:beta-alpha-one-limit-value}
\end{equation}
Because there are finitely many rankings, their maximum converges as well. The limiting online optimum is computed by
\begin{equation}
Q(S)=\max_{i\in S}\{Q(S\setminus\{i\})+h(s_i-Q(S\setminus\{i\}))\},\quad Q(\varnothing)=0.
\label{eq:beta-alpha-one-subset-dp}
\end{equation}
Every fixed limiting vector with positive optimal welfare can therefore be approximated by finite-$\alpha$ instances with converging ratios. This justifies the limiting Beta curve in \Cref{fig:failure_by_horizon}; it does not identify the worst ratio over all values or justify interchanging the limit with an unrestricted infimum.

The explicit pair in \Cref{prop:beta-alpha-one-gap} has limiting ratio $91.5663731\ldots\%$. Optimizing the limiting two-worker problem numerically gives the slightly lower $91.5661\ldots\%$. They agree to the three decimal places used in the tables, but only the former is the ratio of the displayed analytical construction.
\subsection{Weibull Costs: The Gumbel Limit}
Large-shape Weibull costs also concentrate near one, but have no finite upper endpoint. Centering costs at one and multiplying by $k$ gives a Gumbel-for-minima limit with support on both sides of zero. The result below shows that the welfare comparison converges along with the costs.

\begin{proposition}[Weibull--Gumbel boundary layer]
\label{prop:weibull-gumbel-limit}
Let $Z_k$ be unit-scale Weibull with shape $k$, and set $T_k=k(Z_k-1)$. Then $T_k$ converges in distribution to a standard Gumbel-for-minima random variable $T$. Moreover, for every fixed value vector $\bm{s}=(s_1,\ldots,s_n)$ and every ranking $\sigma$,
\begin{equation}
kV_{F_k}\!\left(1+\frac{s_{\sigma(1)}}k,\ldots,
1+\frac{s_{\sigma(n)}}k\right)
\longrightarrow V_H(s_{\sigma(1)},\ldots,s_{\sigma(n)}).
\label{eq:weibull-gumbel-recursion-limit}
\end{equation}
The convergence also holds after maximizing over the finitely many rankings. Consequently, every strict Gumbel-minimum counterexample generates Weibull counterexamples for all sufficiently large $k$.
\end{proposition}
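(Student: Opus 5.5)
Three steps: establish the distributional limit; show the rescaled incremental value functions converge to that of the limit and are uniformly 1-Lipschitz; then run backward induction through the recursion of \Cref{lem:optimal_wages}, exactly as in the Beta boundary-layer argument.

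\emph{Distributional limit.} For fixed $t\in\R$ and $k>-t$,
\[
\P\{T_k\le t\}=\P\{Z_k\le 1+t/k\}=1-\exp\!\bigl(-(1+t/k)^k\bigr)\longrightarrow 1-e^{-e^{t}}=H(t),
\]
since $(1+t/k)^k\to e^t$. Hence $T_k\Rightarrow T$, where $T$ is standard Gumbel for minima with CDF $H$.

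\emph{Rescaled incremental values.} Let $\psi_k$ and $\psi_H$ be the incremental value functions for $Z_k$ and $T$. Since $Z_k=1+T_k/k$,
\[
k\,\psi_k(1+s/k)=k\,\E[(1+s/k-Z_k)^+]=\E[(s-T_k)^+]\eqqcolon\phi_k(s).
\]
I will show $\phi_k(s)\to\psi_H(s)=\E[(s-T)^+]$ pointwise. By \eqref{eq:psi_integral},
\[
\phi_k(s)=\int_{-\infty}^{s}\P\{T_k\le r\}\,dr=\int_{-k}^{s}\P\{T_k\le r\}\,dr,
\]
because $T_k\ge -k$. The integrand converges pointwise to $H(r)$, so dominated convergence on $(-\infty,s]$ suffices. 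For this I need a uniform integrable bound: for $-k<r\le 0$,
\[
\P\{T_k\le r\}=1-\exp\!\bigl(-(1+r/k)^k\bigr)\le (1+r/k)^k\le e^{r},
\]
using $1-e^{-a}\le a$ and $\log(1+u)\le u$. The bound $e^r$ is integrable on $(-\infty,0]$, and the integrand is at most $1$ on the bounded piece $[0,s]$. This uniform tail domination is the main technical point. It is the one place where the unbounded left support of $T$ differs from the Beta case, because there $h$ came from an explicit formula. Each $\phi_k$ is also $1$-Lipschitz, since its derivative is a CDF.

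\emph{Backward induction.} Fix $\sigma$ and define $q^{(k)}_j=kV_{F_k}(v_{\sigma(j)},\dots,v_{\sigma(n)})$ with $v_i=1+s_i/k$. Multiplying \eqref{eq:value_update} by $k$ and writing $V=q/k$ gives
\[
q^{(k)}_j=q^{(k)}_{j+1}+\phi_k\bigl(s_{\sigma(j)}-q^{(k)}_{j+1}\bigr),
\]
which has the same form as the limit recursion with $\psi_H$. I induct backward from $q_{n+1}=0$. Suppose $q^{(k)}_{j+1}\to q_{j+1}$. Then
\[
\bigl|\phi_k(s-q^{(k)}_{j+1})-\psi_H(s-q_{j+1})\bigr|\le |q^{(k)}_{j+1}-q_{j+1}|+\bigl|\phi_k(s-q_{j+1})-\psi_H(s-q_{j+1})\bigr|,
\]
using the $1$-Lipschitz property for the first term. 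Both terms vanish, so $q^{(k)}_j\to q_j$. This proves \eqref{eq:weibull-gumbel-recursion-limit}.

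\emph{Conclusion.} There are finitely many rankings, so the maximum over rankings converges as well. Now take a strict Gumbel-minimum counterexample $\bm s$, meaning some ranking satisfies $V_H(\bm s_\sigma)>V_H(\bm s)$ for the greedy order. Positive scaling preserves the order of $1+s_i/k$, so greedy is the same ranking for every $k$. The strict gap therefore persists in $kV_{F_k}$ for all large $k$, which yields the Weibull counterexamples.
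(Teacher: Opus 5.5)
Your proposal is correct and follows the paper's proof step for step. It uses the same CDF limit, the same integrable lower-tail envelope $(1+t/k)^k\le e^t$ (with zero mass below $-k$) to get $k\psi_k(1+s/k)\to\E[(s-T)^+]$, the same backward induction through the rescaled recursion using the uniform $1$-Lipschitz property, and the same finite maximum over rankings; your triangle-inequality step and the remark that the greedy order of $1+s_i/k$ does not depend on $k$ just spell out details the paper leaves implicit.
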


\proof{Proof.}
For fixed $t$ and all sufficiently large $k$,
\[
\Pr(T_k\le t)=1-\exp\!\left[-(1+t/k)^k\right]
\longrightarrow 1-e^{-e^t}=H(t).
\]
Define $h_k(s)=k\psi_k(1+s/k)=\E[(s-T_k)^+]$. For $-k<t<0$,
$\Pr(T_k\le t)\le(1+t/k)^k\le e^t$, while the probability is zero for $t\le-k$; this integrable lower-tail bound and pointwise convergence give
$h_k(s)\to h(s)\defeq\E[(s-T)^+]$.

For a fixed order, multiply every step of the continuation recursion by $k$. If $q_{j+1}^{(k)}$ is the scaled downstream value, then
\[
q_j^{(k)}=q_{j+1}^{(k)}
+h_k\!\left(s_{\sigma(j)}-q_{j+1}^{(k)}\right),
\qquad q_{n+1}^{(k)}=0.
\]
The functions $h_k$ are $1$-Lipschitz, so backward induction gives \eqref{eq:weibull-gumbel-recursion-limit}. Taking the maximum preserves convergence because there are only $n!$ rankings. A strict limiting inequality persists for all sufficiently large $k$.
\hfill\Halmos
\endproof

\subsection{Search Settings and Numerical Results}
\label{sec:performance_design}

\paragraph{Two-worker Beta experiments.}
For each shape, we search the compact region in \Cref{lem:beta-alpha-one-compact} using
\[
0\le y\le1+\mu_\alpha,\qquad
x=\psi_\alpha(y)+t[y-\psi_\alpha(y)],\qquad 0\le t\le1.
\]
We evaluate an $801\times801$ equally spaced grid in $(y,t)$, then improve the best point by taking nearby steps in either coordinate and halving the step sizes when no improvement is found. Refinement stops when both relative step sizes are below $10^{-12}$. The 19 shapes above one are
\[
\begin{gathered}
1.0001,\ 1.001,\ 1.01,\ 1.02,\ 1.05,\ 1.1,\ 1.2,\ 1.3,\ 1.5,\ 2,\\
3,\ 5,\ 10,\ 20,\ 50,\ 100,\ 200,\ 500,\ 1000.
\end{gathered}
\]
Shapes at or below one use the analytical optimality result. The numerical search uses 12,190,419 grid evaluations and 24,571 refinement evaluations.

\paragraph{Two-worker Weibull experiments.}
We search $0\le y\le F_k^{-1}(1-10^{-12})$ and the same remaining band $x=\psi_k(y)+t[y-\psi_k(y)]$. The grid uses 241 equally spaced probabilities on $[10^{-4},1-10^{-4}]$, 41 geometric probabilities from $10^{-12}$ to $10^{-4}$, and their upper-tail reflections. This gives 321 distinct $y$-values. We combine 321 equally spaced $t$-values on $[0,1]$ with 41 additional values having geometric distances $1-t$ from $10^{-12}$ to $10^{-4}$, giving 362 values and 116,202 pairs per shape. The numerical shapes are
\[
1.5,\ 2,\ 3,\ k_0,\ 3.5,\ 4,\ 5,\ 7.5,\ 10,\ 15,\ 20,\ 30,\ 50,\ 100.
\]
The plotted shapes $0.5$ and $1$ use \Cref{thm:global_optimality}. Each numerical shape also uses six differential-evolution runs, with seeds $0,\ldots,5$, population multiplier 15, at most 400 generations, and bounded Powell refinement. Powell refinement improves a candidate through successive searches along coordinate directions and combinations of directions. Across shapes, there are 1,626,828 grid evaluations and 74,953 optimization or refinement evaluations.

\paragraph{Experiments with two through eight workers.}
We search Weibull with $k=10$, standard Gumbel minima, and the limiting Beta model in \eqref{eq:beta-alpha-one-limit-recursion}. Each coordinate lies in
\[
\begin{array}{ll}
\text{Weibull:} & [0,F_{10}^{-1}(1-10^{-12})],\\
\text{Gumbel minima:} & [F^{-1}(10^{-10}),F^{-1}(1-10^{-12})],\\
\text{Limiting Beta:} & [-4,2]\quad\text{for the scaled value }s_i.
\end{array}
\]
Each family and worker count uses seeds $0,\ldots,5$, population multiplier ten, at most 250 differential-evolution generations, and bounded Powell refinement. Both the two-worker Weibull optimization and these experiments use relative tolerance $10^{-10}$ and absolute tolerance $10^{-12}$ for differential evolution. Powell refinement uses coordinate tolerance $10^{-11}$, objective tolerance $10^{-13}$, and at most 250 iterations.

We also reuse good candidates across experiments. Every subset of each selected longer vector is evaluated at the corresponding smaller worker count, and the six best distinct projections are refined. The best shorter vectors are extended to initialize later searches; for Weibull, appending zero-value workers preserves their welfare ratios. Gumbel candidates also initialize the Weibull search through the scaling in \Cref{prop:weibull-gumbel-limit}. These checks help avoid missing a pattern already found elsewhere, but do not establish global optimality of the search.

The Weibull and Gumbel experiments use 542,659 initial optimization or refinement evaluations and 23,441 further refinement evaluations across worker counts. The limiting Beta experiment uses 281,066 and 72,699, respectively. The saved performance results contain 28 selected 80-digit checks for Weibull/Gumbel and 26 for finite and limiting Beta.

\begin{table}[H]

\centering
\small
\setlength{\tabcolsep}{4pt}
\begin{tabular}{@{}lrrrrrrrr@{}}
\hline
$\alpha$ & $1$ & $1.1$ & $2$ & $5$ & $10$ & $100$ & $1000$ \\
\hline
Beta ratio (\%) & $100.000$ & $99.537$ & $96.821$ & $93.982$ & $92.833$ & $91.698$ & $91.579$ \\
\hline
\end{tabular}
\par\medskip
\begin{tabular}{@{}lrrrrrrrr@{}}
\hline
$k$ & $1$ & $k_0$ & $3.5$ & $5$ & $10$ & $20$ & $50$ & $100$ \\
\hline
Weibull ratio (\%) & $100.000$ & $100.000$ & $99.981$ & $99.167$ & $97.859$ & $97.100$ & $96.615$ & $96.448$ \\
\hline
\end{tabular}
\caption{Selected lowest two-worker ratios found. The entries at $\alpha\le1$ and $k\le1$ are analytical. The entry at $k_0$ reports a finite search with no counterexample found.}
\label{tab:beta-alpha-one-performance}
\label{tab:weibull-shape-performance}
\end{table}

\begin{figure}[H]

\centering
\includegraphics[width=0.55\linewidth]{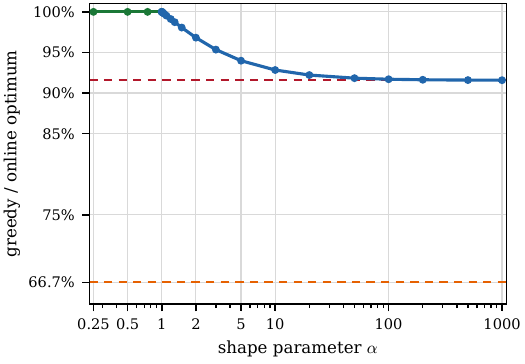}
\caption{Two-worker Beta performance as costs concentrate near their upper endpoint. Green denotes proved optimality and blue shows the lowest ratios found. The dashed red line is the ratio of the explicit limiting construction in \Cref{prop:beta-alpha-one-gap}, not a proof of the optimal limiting ratio. The orange line is the universal two-worker guarantee.}
\label{fig:beta-alpha-one-performance}
\label{fig:beta-alpha-one-by-alpha}
\end{figure}

\begin{table}[H]

\centering
\small
\setlength{\tabcolsep}{4pt}
\begin{tabular}{@{}lrrrrrrr@{}}
\hline
Number of workers $n$ & $2$ & $3$ & $4$ & $5$ & $6$ & $7$ & $8$ \\
\hline
Weibull $k=10$ & $97.859$ & $97.625$ & $97.625$ & $97.625$ & $97.625$ & $97.625$ & $97.625$ \\
Gumbel minima & $96.279$ & $95.363$ & $95.081$ & $95.081$ & $95.081$ & $95.069$ & $95.000$ \\
Limiting Beta & $91.566$ & $90.004$ & $89.377$ & $89.041$ & $88.833$ & $88.692$ & $88.589$ \\
Universal guarantee & $66.667$ & $60.000$ & $57.143$ & $55.556$ & $54.545$ & $53.846$ & $53.333$ \\
\hline
\end{tabular}
\caption{Lowest ratios found as the number of workers increases, in percent. The first three rows are numerical results. The last row is the rigorous bound $100n/(2n-1)$. The Beta row uses the limiting model, rather than a fixed finite shape.}
\label{tab:suboptimal-horizon-performance}
\end{table}


\section{A Unified Quantile Framework for Sequential Offering}
\label{sec:quantile_framework}

This section places welfare, profit, and weighted objectives in a common
acceptance-probability formulation. If the platform targets acceptance
probability $q\in[0,1]$ for worker $i$, write its expected one-step
reward as
\begin{equation}
R_i(q)=v_iq-K(q).
\end{equation}
For the physical cost distribution $F$, the three cases of interest are
\begin{align}
K_W(q)&=\int_0^q F^{-1}(s)\,ds, &
K_\Pi(q)&=qF^{-1}(q), \notag\\
K_\alpha(q)&=(1-\alpha)K_W(q)+\alpha K_\Pi(q),
& &\alpha\in[0,1].                                      \label{eq:K_objectives}
\end{align}

\begin{assumption} \label{ass:K_convex}
$K$ is finite and four times continuously differentiable on $(0,1)$, with $K''(q)>0$. Its endpoint values are the one-sided limits: $K(0)=\lim_{q\downarrow0}K(q)=0$ and $K(1)=\lim_{q\uparrow1}K(q)\in\R\cup\{+\infty\}$. The one-sided limits of $K'$ may be infinite.
\end{assumption}

Fix a sequence and let $x$ be the current value less the downstream
continuation value. The incremental value is the convex conjugate
\begin{equation}
\psi_K(x)=\max_{q\in[0,1]}\{qx-K(q)\}.
\end{equation}
At an interior optimum, $x=K'(q)$ and
$\psi_K'(x)=q=(K')^{-1}(x)$. At the endpoints, the maximizer is
$q=0$ or $q=1$, which represents skipping an offer or inducing certain
acceptance. The dynamic program therefore has exactly the recursion in
\Cref{lem:optimal_wages}, with $\psi$ replaced by $\psi_K$.

This equivalence can be stated probabilistically. Define an effective
cost $Y_K$ by its quantile function
\begin{equation}
G_K^{-1}(q)=K'(q),\qquad q\in(0,1).
\end{equation}
Up to an additive constant fixed by $K(0)=0$,
\begin{equation}
\psi_K(x)=\E[(x-Y_K)^+].                                  \label{eq:effective_cost}
\end{equation}
Thus the generalized objective is isomorphic to the welfare problem for
the effective distribution $G_K$; it is not merely analogous to it.

\begin{theorem} \label{thm:generalized_optimality}
Under \Cref{ass:K_convex}, suppose the endpoint behavior of $G_K$
satisfies the support conditions in \Cref{thm:global_optimality}. The
greedy ranking is globally optimal if, for every $q\in(0,1)$,
\begin{equation}
K'''(q)\ge0,
\qquad
3\big(K'''(q)\big)^2-K^{(4)}(q)K''(q)\ge0.               \label{eq:K_conditions}
\end{equation}
\end{theorem}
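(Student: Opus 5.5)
The plan is to reduce \Cref{thm:generalized_optimality} to \Cref{thm:global_optimality} through the effective-cost representation \eqref{eq:effective_cost}. The proof of \Cref{thm:global_optimality}, with its adjacent-swap and propagation argument, uses the objective only through the recursion $V(\vu_{t:n})=V(\vu_{t+1:n})+\psi(u_t-V(\vu_{t+1:n}))$ and through the shape of the CDF behind $\psi$. It is therefore enough to do two things. First, show that $\psi_K$ is \emph{exactly} the welfare incremental value function of the cost law $G_K$. Second, show that \eqref{eq:K_conditions} says precisely that $G_K$ has a non-increasing, convex, continuously differentiable density.

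\emph{Step 1: identify $\psi_K$.} I would let $Y_K=K'(U)$ with $U$ uniform on $(0,1)$. Then $G_K^{-1}=K'$, and this is well defined and strictly increasing because $K''>0$. For fixed $x$, the map $q\mapsto qx-K(q)$ is concave. Its maximum is at $q^*=G_K(x)$, truncated to $[0,1]$ at the endpoints. This gives
\[
\psi_K(x)=\int_0^{q^*}\bigl(x-K'(s)\bigr)\,ds=\E[(x-Y_K)^+],
\]
where the first equality uses $K(0)=0$. So the additive constant is exactly zero. I would check the endpoint cases separately. If $x\le K'(0^+)$, both sides vanish. If $x\ge K'(1^-)$, the maximizer is $q=1$, and finiteness of $K(1)$ corresponds to $\E[Y_K^-]<\infty$. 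Consequently the Bellman recursion and the swap difference $D$ of \eqref{eq:swap_difference} for objective $K$ coincide with those of the welfare problem under cost distribution $G_K$. In particular, the monotonicity of $z\mapsto z+\psi_K(u-z)$ and the iterated removal of inversions carry over verbatim.

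\emph{Step 2: translate the shape conditions.} The density of $Y_K$ at $y=K'(q)$ is $g(y)=1/K''(q)$. Using $dq/dy=1/K''$ and the chain rule,
\[
g'(y)=-\frac{K'''(q)}{K''(q)^3},\qquad
g''(y)=\frac{3K'''(q)^2-K^{(4)}(q)K''(q)}{K''(q)^5}.
\]
Since $K''>0$, the first condition in \eqref{eq:K_conditions} is equivalent to $g'\le0$, and the second is equivalent to $g''\ge0$. The smoothness of $K$ (Assumption~\ref{ass:K_convex}, four times continuously differentiable) makes $g$ continuously differentiable on the open support $(K'(0^+),K'(1^-))$. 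Because $Y_K$ is defined through a strictly increasing quantile, it is atomless in the interior.

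\emph{Step 3: endpoints and conclusion.} The theorem assumes that $G_K$ meets the endpoint conditions of \Cref{thm:global_optimality}: a lower endpoint, possibly with an atom, and a terminal density jump with no atom at a finite upper endpoint. After a common location shift normalizing the lower endpoint to $0$, which shifts all values and preserves all comparisons exactly as in \eqref{eq:search-location-scale}, \Cref{thm:global_optimality} applies to $G_K$. It gives $D\ge0$ for every adjacent inversion, and hence greedy optimality for objective $K$.

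I expect the main obstacle to be Step 1 at the boundary. One must verify that $\psi_K$ equals $\E[(x-Y_K)^+]$ with no leftover constant, and that the possibly infinite limits of $K'$ and the value of $K(1)$ correspond exactly to the permitted support behavior of $G_K$ (lower endpoint finite, no upper atom, finite negative part). The derivative computation in Step 2 is routine by comparison.
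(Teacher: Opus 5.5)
Your proposal is correct and follows the paper's proof. The paper likewise computes $g_K(K'(q))=1/K''(q)$ and the two derivatives you display, reads \eqref{eq:K_conditions} as exactly the statement that $g_K$ is non-increasing and convex, and then invokes \Cref{thm:global_optimality} through the effective-cost representation \eqref{eq:effective_cost}. Your Step 1 (showing $\psi_K(x)=\E[(x-Y_K)^+]$ with zero additive constant, so the Bellman recursion and swap difference coincide) and your Step 3 (the location shift to a zero lower endpoint) only make explicit what the paper asserts in the text before the theorem.
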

\proof{Proof.}
The effective density satisfies
\[
g_K(K'(q))=\frac{1}{K''(q)}.
\]
Differentiating with respect to effective cost gives
\begin{align*}
g_K'(K'(q))
&=-\frac{K'''(q)}{(K''(q))^3},\\
g_K''(K'(q))
&=\frac{3(K'''(q))^2-K^{(4)}(q)K''(q)}{(K''(q))^5}.
\end{align*}
The conditions in \eqref{eq:K_conditions} therefore say exactly that
$g_K$ is non-increasing and convex. Apply
\Cref{thm:global_optimality} to the effective-cost representation
\eqref{eq:effective_cost}.
\hfill\Halmos
\endproof

For welfare, $K_W'(q)=F^{-1}(q)$, so the effective distribution is the
physical distribution itself. For profit, let $w=F^{-1}(q)$ and define
the virtual cost
\begin{equation}
\phi(w)=w+\frac{F(w)}{f(w)}.
\end{equation}
Then $K_\Pi'(q)=\phi(w)$. Under the regularity condition $\phi'(w)>0$,
the effective profit cost is $Y_\Pi=\phi(Z)$ and has density
\begin{equation}
g_\Pi(\phi(w))=\frac{f(w)}{\phi'(w)}.                    \label{eq:profit_density}
\end{equation}
Consequently, greedy ranking is profit-optimal whenever this
virtual-cost density is non-increasing and convex and has the endpoint
behavior required by \Cref{thm:global_optimality}. The interior pricing
rule is
\[
\phi(w_i^*)=v_i-\text{(downstream continuation value)}.
\]
Margins below the lowest virtual cost are skipped, while margins above a finite highest virtual cost induce acceptance probability one.

One useful, but limited, physical-density implication is the following.
If $f$ is non-increasing and log-concave, then $K_\Pi''>0$ and
$K_\Pi'''\ge0$, so the effective profit density is non-increasing.
If its lower support endpoint is finite, this supports local stability after translating that endpoint to zero.
It does \emph{not} imply that the effective density is convex; global optimality still requires the second inequality in \eqref{eq:K_conditions} or a direct verification of \eqref{eq:profit_density}.
Indeed, $K_\Pi''=2/f-Ff'/f^3$ and $K_\Pi'''=-3f'/f^3+F(3(f')^2-ff'')/f^5$, evaluated at $w=F^{-1}(q)$. The inequalities $f'\le0$ and $ff''\le(f')^2$ give the stated signs.

The next family gives a closed-form global result simultaneously for
welfare, profit, and every convex combination between them.

\begin{theorem} \label{thm:profit_canonical}
Let the physical cost have a truncated-Exponential density
\[
f(w)=\frac{\beta e^{-\beta w}}{1-e^{-\beta B}},
\qquad 0\le w\le B,
\]
where $\beta>0$ and $B\in(0,\infty]$. This includes the Exponential
law when $B=\infty$; the limit $\beta\downarrow0$ for finite $B$ is
Uniform on $[0,B]$. For every $\alpha\in[0,1]$, greedy ranking is
globally optimal for the objective $K_\alpha$ in
\eqref{eq:K_objectives}.
\end{theorem}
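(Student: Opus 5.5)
The plan is to bypass the fourth-order inequalities in \eqref{eq:K_conditions} and work with the effective-cost representation \eqref{eq:effective_cost} directly. The proof of \Cref{thm:generalized_optimality} shows that those inequalities are equivalent to the effective density $g_{K_\alpha}$ being non-increasing and convex. So it suffices to identify the effective cost $Y_\alpha$ for $K_\alpha$ in closed form and check the hypotheses of \Cref{thm:global_optimality} for its law.

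\textbf{Step 1: the effective cost.} Write $w=F^{-1}(q)$. From \eqref{eq:K_objectives},
\[
K_\alpha'(q)=(1-\alpha)w+\alpha\Bigl(w+\tfrac{F(w)}{f(w)}\Bigr)=w+\alpha\tfrac{F(w)}{f(w)}\defeq\phi_\alpha(w).
\]
For the truncated Exponential, the normalizing constants cancel, giving $F(w)/f(w)=(e^{\beta w}-1)/\beta$. Hence
\[
Y_\alpha=\phi_\alpha(Z)=Z+\alpha\,\frac{e^{\beta Z}-1}{\beta}.
\]
Its derivative $\phi_\alpha'(w)=1+\alpha e^{\beta w}$ is positive, so $K_\alpha''(q)=\phi_\alpha'(w)/f(w)>0$ and \Cref{ass:K_convex} holds. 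Smoothness is clear on $(0,1)$. The effective density is
\[
g_\alpha(\phi_\alpha(w))=\frac{f(w)}{\phi_\alpha'(w)}=\frac{c}{u(1+\alpha u)},
\]
where $u\defeq e^{\beta w}$ and $c$ is a positive constant.

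\textbf{Step 2: monotonicity and convexity of $g_\alpha$.} Parametrize by $u\in[1,e^{\beta B})$. Then $du/dw=\beta u$ and $dy/dw=1+\alpha u$, and the chain rule gives
\[
\frac{dg_\alpha}{dy}=-c\beta\,h(u),\qquad h(u)\defeq\frac{1+2\alpha u}{u(1+\alpha u)^3}.
\]
This is non-positive, so $g_\alpha$ is non-increasing. Since $u$ is increasing in $y$, convexity of $g_\alpha$ is equivalent to $h$ being non-increasing in $u$. The logarithmic derivative of $h$ is
\[
\frac{2\alpha}{1+2\alpha u}-\frac1u-\frac{3\alpha}{1+\alpha u}.
\]
The first term is at most $1/u$ and the last is non-negative, so the sum is non-positive. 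This shows $g_\alpha$ is $C^1$, non-increasing and convex on the open support. I expect this step to be the main obstacle, but with the substitution $u=e^{\beta w}$ it reduces to the one-line inequality above.

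\textbf{Step 3: endpoint conditions and the Uniform limit.} The lower endpoint is $\phi_\alpha(0)=0$, with no atom because $Z$ is atomless. If $B<\infty$, the upper endpoint $\phi_\alpha(B)$ is finite and carries no atom. The density $g_\alpha$ drops to zero there in a single terminal jump, which \Cref{thm:global_optimality} permits. If $B=\infty$, the support of $Y_\alpha$ is $[0,\infty)$ and no boundary condition is needed.

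For the Uniform case ($\beta\downarrow0$, $B<\infty$), I will check directly that $\phi_\alpha(w)=(1+\alpha)w$. So $Y_\alpha$ is Uniform on $[0,(1+\alpha)B]$, whose constant density trivially qualifies.

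Applying \Cref{thm:global_optimality} to $Y_\alpha$, via the isomorphism \eqref{eq:effective_cost} (equivalently, via \Cref{thm:generalized_optimality}), then yields greedy optimality for every $\alpha\in[0,1]$.
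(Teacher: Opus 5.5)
Your proof is correct and follows the paper's route: both identify the effective cost for $K_\alpha$, verify that its density is non-increasing and convex with admissible endpoint behavior, treat the Uniform limit separately, and invoke \Cref{thm:global_optimality} through the effective-cost representation. The only difference is in how the shape check is done: the paper verifies the equivalent inequalities in \eqref{eq:K_conditions} by differentiating $K_\alpha$ four times in the variable $s=e^{-\beta w}=1/u$ and checking the sign of a quadratic in $s$, whereas you differentiate $g_\alpha$ directly on the cost side and so avoid the fourth derivative.
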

\proof{Proof.}
Let $c=1-e^{-\beta B}$, with $c=1$ when $B=\infty$, and set
$s=1-cq$. Since $F^{-1}(q)=-\beta^{-1}\log s$, direct differentiation
of \eqref{eq:K_objectives} gives
\begin{align*}
K_\alpha'(q)
&=\frac{1}{\beta}\left[-\log s+\alpha\left(\frac1s-1\right)\right],\\
K_\alpha''(q)
&=\frac{c}{\beta}\frac{s+\alpha}{s^2},\\
K_\alpha'''(q)
&=\frac{c^2}{\beta}\frac{s+2\alpha}{s^3},\\
K_\alpha^{(4)}(q)
&=\frac{2c^3}{\beta}\frac{s+3\alpha}{s^4}.
\end{align*}
Thus $K_\alpha''>0$, $K_\alpha'''\ge0$, and
\[
3(K_\alpha''')^2-K_\alpha^{(4)}K_\alpha''
=\frac{c^4}{\beta^2s^6}
  \big(s^2+4\alpha s+6\alpha^2\big)\ge0.
\]
The endpoint conditions follow from the displayed effective quantile.
When $\beta\downarrow0$ and $B<\infty$,
$K_\alpha(q)=B(1+\alpha)q^2/2$, so the same conclusion follows with
$K_\alpha'''=K_\alpha^{(4)}=0$.
\hfill\Halmos
\endproof

\section{Robustness to Deterministically Decaying Markets}
\label{sec:decaying_markets}

In the baseline model, private costs have a common stationary distribution. We now allow known deterministic deterioration across offer positions. At position $t$, whichever worker is approached has cost distribution $F_t$, independent of past rejections; observable net values and the available worker pool remain fixed. This can describe rising private costs as the delivery window shrinks. It does not model learning an unknown common cost distribution or changing the set of eligible workers.

\begin{definition}[Deterministically Decaying Market]
A market is deterministically decaying if $F_1(w)\ge F_2(w)\ge\cdots\ge F_n(w)$ for every real $w$. Thus, at a fixed wage, acceptance becomes weakly less likely at later positions.
\end{definition}

\begin{theorem}\label{thm:decaying_market}
Consider a deterministically decaying market and any real net values $v_1\ge\cdots\ge v_n$. Assume $F_1$ has a finite lower support endpoint. Suppose each downstream distribution $F_t$, $t\ge2$, has a non-negative stationary swap difference for every $x\le y$. This condition holds, in particular, under the hypotheses of \Cref{thm:global_optimality} after translating its lower endpoint, or under \Cref{thm:half_normal}. Then greedy ranking with optimal wages is globally optimal. No additional density-shape condition is needed for $F_1$.
\end{theorem}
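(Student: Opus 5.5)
We generalize the recursion of \Cref{lem:optimal_wages} to position-dependent costs, then rerun the adjacent-swap argument of \Cref{thm:global_optimality}. The key step is to bound the non-stationary swap difference below by the stationary swap difference of the downstream distribution $F_{t+1}$. For each position $s$, let $\psi_s(x)\defeq\E[(x-Z^{(s)})^+]=\int_{-\infty}^x F_s(r)\,dr$ with $Z^{(s)}\sim F_s$. The wage problem at position $s$ has the same form as \eqref{eq:incremental_value} with $F$ replaced by $F_s$, because the cost distribution is attached to the position, not to the worker, and is independent of past rejections. So \Cref{lem:optimal_wages} applies verbatim positionwise. For any sequence $\vu\in\Pi(\vv)$ this gives
\[
V(\vu_{s:n})=V(\vu_{s+1:n})+\psi_s\big(u_s-V(\vu_{s+1:n})\big),\qquad V(\vu_{n+1:n})=0.
\]
The finite lower endpoint of $F_1$, together with $F_1\ge F_s$ pointwise, keeps every $\psi_s$ finite, so each welfare quantity is well defined. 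Each $\psi_s$ is non-decreasing and $1$-Lipschitz, since $\psi_s'=F_s\in[0,1]$. Hence every update map $z\mapsto z+\psi_s(u-z)$ is non-decreasing.

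Next, fix an adjacent inversion at positions $t,t+1$ with $u_t<u_{t+1}$. Set $q=V(\vu_{t+2:n})$, $x=u_t-q$ and $y=u_{t+1}-q$, so $x<y$. Swapping the two workers and re-optimizing both wages changes welfare from position $t$ onward by
\[
D_t(x,y)\defeq\psi_t\big(y-\psi_{t+1}(x)\big)+\psi_{t+1}(x)-\psi_t\big(x-\psi_{t+1}(y)\big)-\psi_{t+1}(y).
\]
Write $a=y-\psi_{t+1}(x)$ and $b=x-\psi_{t+1}(y)$. Since $x\le y$ and $\psi_{t+1}$ is non-decreasing, $a\ge b$. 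Decay gives $F_t\ge F_{t+1}$ pointwise, so
\[
\psi_t(a)-\psi_t(b)=\int_b^a F_t(r)\,dr\ \ge\ \int_b^a F_{t+1}(r)\,dr=\psi_{t+1}(a)-\psi_{t+1}(b).
\]
Therefore $D_t(x,y)\ge D^{(t+1)}(x,y)$, where $D^{(t+1)}$ is the stationary swap difference \eqref{eq:swap_difference} built from $\psi_{t+1}$ alone. By hypothesis $D^{(t+1)}(x,y)\ge0$, because $t+1\ge2$.

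This is why only the downstream distributions need the shape condition. $F_1$ only ever plays the upstream role $\psi_t$ with $t=1$, where it enters solely through the comparison $F_1\ge F_2$. For the parenthetical claim, the hypotheses of \Cref{thm:global_optimality} and \Cref{thm:half_normal} are invariant under the location shift \eqref{eq:search-location-scale}. The proofs of those theorems establish $D\ge0$ for all $x\le y$, so they supply the condition for each $F_t$ with $t\ge2$.

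Finally, we propagate the comparison upstream. After the swap, $V(\cdot_{t:n})$ weakly increases. Composing with the non-decreasing maps $z\mapsto z+\psi_s(u_s-z)$ for $s=t-1,\dots,1$ shows that total welfare weakly increases. Repeatedly removing adjacent inversions, as in bubble sort, reaches the greedy order $\vv$ in finitely many steps without lowering welfare, so $\gre(\vv)=\opt(\vv)$. The main point needing care is the first step: checking that position-dependent distributions leave the Bellman structure and the optimal-wage formula intact. This includes the integrability that keeps each $\psi_s$ finite, and the fact that the swap relabels only the workers while each $F_s$ stays tied to its position. Once that is settled, the integral comparison above does the real work and is short.
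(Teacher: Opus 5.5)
Your proposal is correct and matches the paper's proof essentially step for step. Your integral comparison $\psi_t(a)-\psi_t(b)\ge\psi_{t+1}(a)-\psi_{t+1}(b)$ for $a\ge b$ is exactly the paper's observation that $H_t=\psi_t-\psi_{t+1}$ is non-decreasing, which gives $D_t\ge D^*_{t+1}\ge0$; the finiteness of each $\psi_s$ via $F_1$'s lower endpoint, the monotone upstream propagation and the bubble-sort iteration are all handled the same way.
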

\proof{Proof.}
If $L$ is a lower support bound for $F_1$, first-order stochastic dominance gives $F_t(w)=0$ for every $w<L$ and every $t$, so $\psi_t(x)=\int_{-\infty}^xF_t(s)\,ds$ is finite. The Bellman update is $z+\psi_t(u_t-z)$.

For an inverted adjacent pair, let $x\le y$ be its margins over the unchanged continuation value after the pair. Swapping to put the higher value first changes the pair's value by
\begin{align*}
D_t(x,y)={}&\psi_{t+1}(x)+\psi_t(y-\psi_{t+1}(x))\\
&-\psi_{t+1}(y)-\psi_t(x-\psi_{t+1}(y)).
\end{align*}
Let $D^*_{t+1}$ denote the stationary swap difference for $F_{t+1}$, and set $H_t=\psi_t-\psi_{t+1}$. The dominance assumption implies that $H_t$ is non-decreasing. Hence
\begin{align*}
D_t(x,y)-D^*_{t+1}(x,y)
={}&H_t(y-\psi_{t+1}(x))\\
&-H_t(x-\psi_{t+1}(y))\ge0,
\end{align*}
because the first argument is at least the second. By assumption $D^*_{t+1}\ge0$, so each inverted pair can be swapped without lowering its continuation value. Every preceding update $z\mapsto z+\psi_k(u_k-z)$ is non-decreasing, by the 1-Lipschitz property of $\psi_k$. Reoptimizing preceding wages therefore preserves the improvement. Repeated swaps give the greedy order.
\hfill\Halmos
\endproof

Earlier acceptance opportunities are better under first-order stochastic dominance. The displayed comparison shows that this advantage adds a non-negative term to the stationary swap gain. This is why deterioration preserves a downstream distribution's greedy-optimal ordering property and why the first distribution needs no separate shape restriction.

\end{APPENDICES}

\end{document}